\newtheorem{theorem}{Theorem}[section]
\newtheorem{lemma}[theorem]{Lemma}
\newtheorem{definition}[theorem]{Definition}
\def\BibTeX{{\rm B\kern-.05em{\sc i\kern-.025em b}\kern-.08em
    T\kern-.1667em\lower.7ex\hbox{E}\kern-.125emX}}
\begin{document}

\title{OneDB: A Distributed Multi-Metric Data Similarity Search System }

\author{%
  $^{\dagger}$Tang Qian, $^{\dagger}$Yifan Zhu, $^{\dagger}$Lu Chen, $^{\dagger}$Xiangyu Ke, $^{\sharp}$Jingwen Zhao,
  $^{\ddagger}$Tianyi Li, $^{\dagger}$Yunjun Gao, $^{\ddagger}$Christian S. Jensen\\
  \IEEEauthorblockN{$^{\dagger}$Zhejiang University \quad  $^\sharp$ Poisson Lab. of Huawei \quad $^{\ddagger}$Aalborg University}
  \IEEEauthorblockA{\textit{$^{\dagger}$\{qt.tang.qian, xtf\_z, luchen, xiangyu.ke, gaoyj\}@zju.edu.cn} $^{\sharp}$zhaojingwen5@huawei.com
    $^{\ddagger}$\{tianyi, csj\}@cs.aau.dk}%
}

\maketitle

\begin{abstract}
Increasingly massive volumes of multi-modal data are being accumulated in many {real world} settings, including in health care and e-commerce. This development calls for effective general-purpose data management solutions for multi-modal data. Such a solution must facilitate user-friendly and accurate retrieval of any multi-modal data according to diverse application requirements. Further, such a solution must be capable of efficient and scalable retrieval.

To address this need, we present OneDB, a distributed multi-metric data similarity retrieval system. This system exploits the fact that data of diverse modalities, such as text, images, and video, can be represented as metric data. The system thus affords each data modality its own metric space with its own distance function and then uses a multi-metric model to unify multi-modal data. The system features several innovations: (i) an extended Spart SQL query interface; (ii) lightweight means of learning appropriate weights of different modalities when retrieving multi-modal data to enable accurate retrieval; (iii) smart search-space pruning strategies that improve efficiency; (iv) two-layered indexing of data to ensure load-balancing during distributed processing; and (v) end-to-end system parameter autotuning. 

Experiments on three real-life datasets and two synthetic datasets offer evidence that the system is capable of state-of-the-art performance: (i) efficient and effective weight learning; (ii) retrieval accuracy improvements of 12.63\%--30.75\% over the state-of-the-art vector similarity search system at comparable efficiency; (iii) accelerated search by 2.5--5.75x over state-of-the-art single- or multi-metric solutions; (iv) demonstrated high scalability; and (v) parameter tuning that enables performance improvements of 15+\%.
\end{abstract}

\begin{IEEEkeywords}
component, formatting, style, styling, insert
\end{IEEEkeywords}

\section{Introduction}
\label{sec:intro}

Diverse multi-modal data, when put together, hold the potential to offer a comprehensive and holistic perspective on the part of reality that an application concerns. By integrating diverse data from different information sources, it is possible to support a wide range of applications, such as multimedia search in healthcare and personalized recommendations in e-commerce \cite{jeong2024multimodal}.
However, effectively harnessing multi-modal data presents challenges, particularly when having to retrieve relevant data quickly and accurately. 
With the rapid proliferation of smartphones, IoT devices, and sensors, vast volumes of multi-modal data --- including text, images, and geospatial information --- are being generated at an accelerating pace~\cite{palacios2019wip}. 
For example, Instagram, a platform with 1+ billion monthly active users, sees more than 500 million stories shared daily, encompassing a variety of multimedia content such as photos, videos and geo-location data \cite{afyouni2022multi}. 
{The rapid growth of multi-modal data exposes the limitations of current specialized systems, which handle modalities in isolation and ignore any complementary information across modalities. These shortcomings motivate the development of a general-purpose data management system capable of handling each data modality while supporting similarity search across multiple, heterogeneous modalities.}

Consider an example of similarity search based on multi-modal data as shown in Fig.~\ref{fig:rent_example}, where each data entry includes multiple attributes such as X-rays, diagnostic reports, body temperature, and blood saturation levels, represented as vectors, strings, and numerical values. 
Utilizing similarity search can help identify individuals with potential infections by identifying patient entries that are similar to that of an infected individual. 
When relying solely on a single attribute for similarity search, results can include {\em numerous irrelevant candidates}. 
For instance, if we only retrieve entries based on blood saturation levels, entries of patients with similar $SO_2$ levels are returned (see the lower-left corner of Fig.~\ref{fig:rent_example}). However, the patients with these entries may be very different with it comes to other attributes, leading to insufficient medical diagnoses.
In contrast, incorporating multiple criteria --- body temperature, blood oxygen saturation, X-ray images, and diagnostic reports --- during retrieval enables more accurate entry matching that aligns better with the needs of healthcare professionals (see the lower-right corner of Fig.~\ref{fig:rent_example}). 
This example argues that similarity search based on multi-modal data can {\em yield superior results} compared to search based on single-modal data by {\em incorporating diverse information}. 
Furthermore, while the current dataset includes only four attributes with different modalities, additional modalities (locations, contact history, electrocardiograph, etc.) can be {\em integrated} to enhance diagnose quality. Hence, a {\em general-purpose} similarity search system that supports flexible data modalities is essential.

\begin{figure}[t]
	\center
	\includegraphics*[width=1\linewidth]{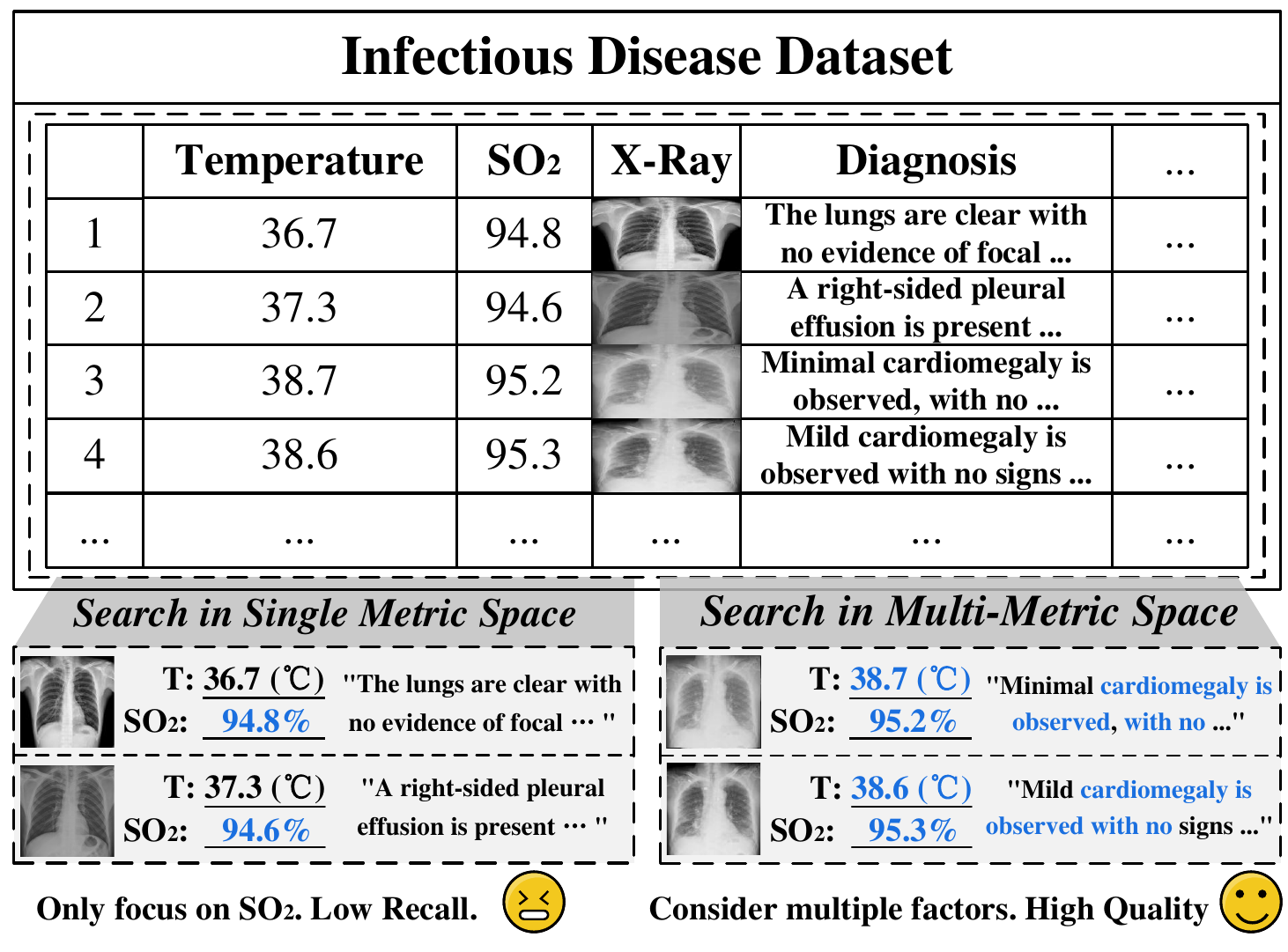}
	\centering
 \vspace{-8mm}
	\caption{Example of similarity search on multi-modal data}
        \vspace{-7mm}
  \label{fig:rent_example}
\end{figure}


While different types of data can be embedded as vectors, allowing for the use of multi-vectors to model multi-modal data. {Multi-vector search provides an approximation solution for the above scenarios, with existing methods mainly represent two approaches: vector fusion and isolated search~\cite{aguerrebere2023similarity}. Vector fusion combines multiple vectors into a single vector by concatenation. This approach increases storage and computation costs and requires decomposable similarity functions. Next, isolated search builds a separate index for each modality. This approach suffers from inefficiencies, such as difficulty in dynamically determining the optimal number of candidates to retrieve for each modality and the need to perform intersection and ranking only after all searches are completed, leading to redundant computations. These limitations impact the efficiency and accuracy of multi-modal data retrieval adversely.} 

However, {\em exact solutions are essential in some applications}. For instance, in epidemic prevention and control, missing potential cases (i.e., approximate solutions) may have severe consequences, 
while in healthcare, the accurate retrieval of patient records combining medical images and clinical notes is vital for diagnosis and treatment planning~\cite{huang2023one}.
Accurate retrieval of relevant data entries involving diverse modalities is critical, as even minor inaccuracies can result in relevant data entries not being retrieved, leading to catastrophic consequences. Based on this reasoning, we adopt the multi-metric space model, as it {\em naturally accommodates the complexities of multi-modal data}. Specifically, metric space modeling is applicable to data across different modalities and enables the use of different distance metrics for different data.
A multi-metric space integrates multiple distinct metric spaces, each tailored to a specific type of data with its particular distance metric. 
Thus, we adopt the multi-metric space model, which facilitates the development of efficient and accurate retrieval. By ensuring accuracy, our approach meets the demands of real-world critical applications.

Existing methods take the form of distributed methods in single or specific metric spaces \cite{shang2018dita, icde/ZhengWZZ0J21, waim/ZhuSKNY12, icde/ChenGLJC15} and standalone methods in multi-metric spaces \cite{zhu2024dimsdistributedindexsimilarity, franzke2016indexing}.
The former fail to support multi-modal data, while the latter cannot handle large volumes of data effectively. Consequently, neither approach facilitates efficient and scalable similarity search on large-volume multi-metric data.
Vector database system, such as {\sf AnalyticDB-V}~\cite{wei2020analyticdb}, {\sf Pase}~\cite{yang2020pase}, {\sf ElasticSearch}~\cite{elasticsearch2018elasticsearch}, and {\sf PGvector}~\cite{pgvector}, focus primarily on similarity search on single vectors. Although {\sf VBase}~\cite{zhang2023vbase} and {\sf Milvus}~\cite{wang2021milvus} support multi-vector similarity search, they have notable limitations.
{\sf VBase} operates as a standalone solution, while {\sf Milvus} employs a naive baseline that aggregates the top-$k$ results from each modality. This naive approach selects an appropriate $k$ value for each modality that is both low and enables an accurate final top-$k$ result, as each modality contributes differently to the overall outcome. A larger $k$ can reduce query efficiency, whereas a smaller $k$ can compromise result accuracy, as shown in Section~\ref{subsec:expsimilarity}.
Three key challenges must be addressed to support efficient and scalable similarity search on large-volume multi-metric data.


\textbf{\textit{Challenge I: How to provide proper modality weights for multi-modal data?}} 
During similarity search on multi-modal data, a common approach involves computing similarities for each modality separately and then perform a weighted summation to obtain an overall similarity. 
The choice of weights significantly influences the results. 
Most previous studies \cite{franzke2016indexing, zhu2024dimsdistributedindexsimilarity} assume that modality weights are {\em fixed} and provided by users during index construction.
However, this assumption does not reflect real-world scenarios where 
(1) different users have {\em varying preferences}, necessitating adaptable weights, 
and (2) users are not aware of their precise preferences, often expressing their needs {\em in the forms of desired outcomes} (e.g., exemplar top-$k$ results) rather than specific weight values.

To facilitate flexible similarity queries with variable weights, we propose a two-tiered indexing approach. At the global level, we apply uniform weights across all modalities during index construction on a central master node, allowing for efficient filtering of modalities not involved in a query. Concurrently, local index construction on worker nodes employs separate indexes for each modality, enabling targeted filtering for improved efficiency.

We propose a modality weight learning model for use when explicit weights are not provided. 
Recognizing the challenge of acquiring numerous query cases (i.e., query objects together with $k$ similar result objects), we employ two strategies: 
(i) To effectively learn proper weights given limited query cases, we design an effective positive and negative sample generation strategy for use during the training. 
In experiments, we find that just 30 query cases are sufficient to obtain high-quality weights.  
(ii) Our model is lightweight, optimizing a loss function that minimizes the distance from the query point to its corresponding positive samples while maximizing the distance to the negative ones.
Our experiments show that training for less than 100 seconds is sufficient to achieve weights that enable queries with approximately 90\% recall 
 {(Exp.~\ref{fig:weight})}.

\textbf{\textit{Challenge II: How to obtain high query efficiency in a distributed environment?}} 
It is crucial to utilize computational resources in a distributed environment fully through effective load balancing to support efficient similarity search on large volumes of multi-modal data. However, distributing data evenly across modalities is challenging.
Two main partitioning strategies exist: data granularity partitioning (horizontal partitioning)~\cite{zhu2024dimsdistributedindexsimilarity} and modality granularity partitioning (vertical partitioning)~\cite{zhu2022desire}.
Data granularity partitioning operates at the level of multi-modal entries, or records, partitioning an entire dataset into subsets with an equal number of records. While this approach ensures that all modalities are present on a single node, it {\em complicates the retrieval of similar data under varying weights}. 
In contrast, modality granularity partitioning operates at the level of multi-modal attributes and partitions the dataset according to different modalities, clustering data more effectively but {\em incurring high communication costs} to access all modalities across distributed nodes.
To address the second challenge, we propose a dual-layer indexing strategy that combines these two partitioning approaches. 
On the master node, we perform coarse-grained data granularity partitioning, assuming uniform modality weights since these cannot be predetermined. We cluster the data using all modalities with equal weights. Specifically, we first select a high-quality pivot from each metric space, mapping each space into a one-dimensional vector space. Given $m$ modalities, this results in vectors in an $m$-dimensional vector space. 
We then employ the {\sf RR*-tree} \cite{franzke2016indexing} to partition the dataset evenly, ensuring that the data in its leaf nodes are similar by considering all mapped dimensions. 
We implement fine-grained modality granularity partitioning on the worker nodes. 
Each modality is clustered using different index structures (e.g., using inverted indexes for strings and R-trees for low-dimensional vector data) to enhance clustering performance and facilitate effective pruning. Consequently, a multi-metric space index forest is constructed at each worker node. 
Furthermore, we develop effective pruning strategies for both the global and local levels to optimize query efficiency for high-similarity search.

Traditional relational database systems, while effective for structured data, struggle to efficiently support multi-modal data and scalable execution over complex similarity queries. In contrast, OneDB is built on Apache Spark, which provides 1) unified abstractions (DataFrame, Dataset) for heterogeneous data, 2) a distributed in-memory architecture with high scalability and fault tolerance, and 3) an extensible SQL optimizer that enables seamless integration of custom index structures and similarity operators. This allows OneDB to integrate diverse index structures (e.g., R-tree, MVP-tree, inverted index) into a unified system with scalable multi-modal similarity search.  Prior studies~\cite{sun2017dima, xie2016simba} have also demonstrated Spark's effectiveness in distributed similarity analytics. 



\textbf{\textit{Challenge III: How to achieve end-to-end system parameter auto-tuning?}} 
While the proposed dual-layer indexing strategy aims to utilize computational resources fully by distributing multi-modal data evenly, real distributed environments present challenges such as communication overhead and variable computational power among nodes, influenced by factors like network latency and additional workloads (e.g., Google found that I/O activity can vary by up to 90\% in their distributed systems, leading to substantial differences in processing times among nodes~\cite{53317}). 
Existing studies \cite{zhu2024dimsdistributedindexsimilarity} often rely on idealized cost models based on assumptions (e.g., uniform data distribution) that do not hold true in pratice. 
Moreover, studies often overlook the significance of system parameters (e.g., Spark configuration settings, sampling rates), which often require the expertise of skilled database administrators to tune manually.
To address these challenges, we propose an end-to-end auto-tuning module that leverages deep reinforcement learning. In this approach, a reward function evaluates tuning performance and provides feedback to the learning model, facilitating continuous improvement. Additionally, we employ the deep deterministic policy gradient method, enabling the tuning of parameters in continuous space. This adaptive tuning mechanism improves system performance by aligning resource allocation with the fluctuating demands of multi-modal data processing (Exp.~\ref{fig:reward_function}).

In summary, our main contributions are as follows.
\begin{itemize}
[topsep=2pt,itemsep=2pt,parsep=0.5pt,partopsep=0.5pt,leftmargin=*]
\item{} \textit{Distributed Multi-Metric Similarity Search Framework.} We propose OneDB, a distributed framework for multi-metric space similarity search, supporting flexible and efficient multi-metric exact similarity search. To the best of our knowledge, this is the first general distributed system to support similarity search in multi-metric space.
\item{} \textit{Multi-Metric Weight Learning.} We propose a lightweight metric weight learning model that efficiently captures the inter-modality weight relationships for user-targeted queries.
\item{} \textit{Effective Dual-Layer Indexing Strategy.} We design a hybrid modality granularity indexing strategy that supports the flexible combination of multi-metric spaces while avoiding high index construction and search costs.

\item{} \textit{End-to-End Parameter Auto-Tuning Module.} We design an end-to-end auto-tuning module that leverages deep reinforcement learning to optimize parameters through a reward function evaluation and deep deterministic policy gradient.

\item{} \textit{Experimental Study.} Experiments on  real-life and synthetic 
 datasets offer insight into OneDB: it i) achieves speedups of {2.5--5.75x} compared to state-of-the-art single- or multi-metric space solutions; ii) improves accuracy by {12.63\%--30.75\%} with comparable efficiency compared to state-of-the-art vector database system; iii) {exhibits high scalability}; iv) more efficient and effective weight learning compared to the random strategy; and v) achieves performance improvements of 15\%--17\% after parameter tuning.
\end{itemize}

The rest of the paper is organized as follows. Section~\ref{sec:related} reviews related work. Section~\ref{sec:problem} presents problem statement. Section~\ref{sec:framework} provides an overview of our system. 
Sections~\ref{sec:index&search}--\ref{sec:tuning} present the detailed techniques of our system.
Section~\ref{sec:experiments} reports on a comprehensive experimental evaluations. Finally, Section~\ref{sec:conclusions} concludes the paper.
\section{Related Work}
\label{sec:related}

\vspace{-1mm}
\subsection{Metric Similarity Search}
Similarity search in a single metric space aims to identify objects that are most similar to a given query object based on a single metric or distance measure~\cite{chen2022indexing}. Indexing solutions have been developed to accelerate similarity search: compact partitioning, pivot-based, or hybrid methods.

\noindent\textbf{Compact Partitioning Methods} partition a metric space into compact sub-regions. They are designed to reduce a search space by discarding irrelevant regions through pruning techniques. Notable examples include the Generalized Hyperplane Tree~\cite{ipl/Uhlmann91}, which employs hyperplane-based partitioning, and the {\sf M-Tree}~\cite{vldb/CiacciaPZ97}, which utilizes ball partitioning.

\textbf{Pivot-based Methods} transform a metric space into a vector space using a set of reference points, or pivots. By computing distances between objects and selected pivots, geometric properties can be applied to filter out non-matching objects efficiently. Techniques like the Linear Approximate Similarity Search Algorithm ({\sf LAESA})~\cite{mico1994new} and the {\sf MVP-Tree}~\cite{sigmod/BozkayaO97,tods/BozkayaO99} exemplify this approach, where pre-computed distances facilitate fast similarity search.
\textbf{Hybrid Methods} integrate the advantages of compact partitioning and pivot-based techniques to achieve better performance. For example, the Geometric Near-neighbor Access Tree ({\sf GNAT})~\cite{vldb/Brin95} integrates hyperplane partitioning with the use of pivots to enhance search efficiency, while other methods like the Bisector Tree~\cite{kalantari1983data} utilize a combination of ball partitioning and pivots to improve the accuracy and speed of similarity searches.

Although different solutions have been developed for similarity search in single-metric spaces, they cannot be efficiently applied to multi-metric spaces. Motivated by it, several indexing methods proposed for similarity search in multi-metric spaces can be classified as combined or separate methods.

\noindent\textbf{Combined Methods.} Combined methods combine multiple distance metrics from different metric spaces linearly into a single metric. {\sf QIC-M-tree}~\cite{tods/CiacciaP02} applies a user-defined distance that can be regarded as a combination of multiple metrics to build the index. The {\sf $M^{2}$-Tree}~\cite{ciaccia2000m2} and {\sf $M^{3}$-Tree}~\cite{bustos2012adapting} adapt {\sf M-tree} indexes to handle multiple metrics in a unified manner. Bustos et al. \cite{bustos2012adapting} also propose a general methodology to adapt existing single metric space indexes, such as the M-Tree and List of Clusters, into multi-metric space indexes. 
 The {\sf RR*-tree}~\cite{7498318} employs reference-object embedding to map multi-metric data into a single vector space, utilizing an R-tree to index the embedded objects. Combined methods simplify index construction. However, whenever a new metric is added, or weights are adjusted, they typically require rebuilding of the entire index, leading to high reconstruction costs.


\noindent\textbf{Separate Methods.} In contrast, separate methods maintain individual indexes for each metric space, allowing for the preservation of distinct metric characteristics. {As pivot-based indexing schemes achieve high efficiency due to the pruning power of pivots, {\sf C-forest}~\cite{celik2006new} and a pivot-based index~\cite{sac/BustosKS05} select high-quality pivots to index objects in each metric space. {\sf Spectra}~\cite{zabot2019efficient} utilizes pivots to embed and index each metric space, and metric spaces with low correlations are indexed together. {\sf DESIRE}~\cite{zhu2022desire} selects high-quality pivots to cluster objects into compact regions in each metric space and uses $B^{+}$-trees to index the distances between pivots and objects. 
Separate methods support flexible combinations of any sub-set of multi-metric spaces and variable weights of each metric. However, they incur significant overhead due to the necessity of constructing and searching multiple indexes.}

Finally, we note that the above multi-metric index solutions are standalone and cannot support large-volume multi-modal data. Hence, we aim to develop distributed indexing techniques for efficient similarity search in multi-metric space.


\vspace {-2mm}
\subsection{Distributed Similarity Search}

The increasing data volumes calls for distributed similarity search methods. Two main categories of methods exist: distributed implementations of existing metric indexes and distributed indexes for specific data types.

\noindent\textbf{Distributed Implementations of Metric Indexes.} 
Methods in this category distribute objects across multiple nodes and utilize existing metric indexes for local indexing. 
Batko et al.~\cite{batko2005similarity} use an embedding method to map objects from metric spaces into a distributed tree structure (Address Search Tree, AST) and utilize GHT for indexing data at worker nodes. Similarly, {\sf M-Chord}, {\sf MT-Chord}, and the {\sf M-index}~\cite{is/NovakBZ11,dpd/DoulkeridisVKV09,waim/ZhuSKNY12} apply iDistance~\cite{tods/JagadishOTYZ05} for global partitioning and employ the B$^+$-tree for local indexing. The asynchronous metric distributed system~\cite{dase/YangDZCZG19} proposes partitioning objects via pivot mapping into minimum bounding boxes and utilizes publish/subscribe communication to support asynchronous processing. Moreover, Sun et al.~\cite{sun2017dima} allocate objects using a global mapping function and store them using an inverted index. Still, this proposal only supports similarity search based on set similarity (e.g., Jaccard similarity) and character similarity.

\begin{table}\label{table:notations}
	\centering   
	\caption{{Frequently used notations}}
    \vspace{-3mm}
	\label{table:notations}
	\small
	\setlength{\tabcolsep}{3pt}
	\begin{tabular}{p{2.8cm}p{5.4cm}}
		\hline
		\textbf{Notation} & \textbf{Description} \\
		\hline
        $q$, $o$ & A query, an object in a metric space \\
        $q^M$, $o^M$ & A query, an object in a multi-metric space \\
        $o^i$ & The component of $o^M$ in the $i$-th metric space \\
        $m$ & The number of metric spaces \\
        $\delta_i(\cdot,\cdot)$ & A distance metric in $i$-th metric space \\
        $\Delta$ & A set of distance metrics $\delta_i$ ($1 \leq i \leq m$) \\
        $W$, $\delta^W(\cdot,\cdot)$ & A weight vector, a multi-metric distance  \\
        $\text{MMRQ}(q^M, W, r)$ & A multi-metric range query for a query object $q^M$ with weight $W$ and radius $r$ \\
        $\text{MMkNNQ}(q^M, W, k)$ & A multi-metric $k$-nearest neighbor query for a query object $q^M$, a weight vector $W$, and an integer $k$ \\
		\hline
	\end{tabular}
	\vspace{-0.5cm}
\end{table}

\noindent\textbf{Specialized Distributed Indexes.} These indexes are tailored to specific data types (such as trajectories or time series) and often incorporate domain-specific optimizations. For instance, {\sf DITA}~\cite{shang2018dita} and {\sf Simba}~\cite{xie2016simba} employ trie-based indexing for trajectories, while {\sf PS2Stream}~\cite{chen2017distributed} uses the Grid-Inverted-Index for distributed spatio-textual data streams.

We note that the above distributed indexes are for single-metric space and are not well-suited for generalized multi-metric space. Hence, we propose a distributed indexing method enabling efficient similarity queries in multi-metric space. Our method builds a combined global index structure and  separated local indexes to achieve both efficiency and flexibility.


\vspace {-2mm}
\section{Preliminaries}
\label{sec:problem}

We proceed to introduce the concept of a {\em multi-metric space} along with key definitions relevant to similarity search. Table~\ref{table:notations} summarizes frequently used notation. 

A {\em metric space} is a pair $(M, \delta)$, where $M$ denotes a set of objects (or a domain) and $\delta$ is a distance function (metric) that quantifies the similarity between any two objects $q$ and $o$ in $M$. 
The distance metric $\delta$ adheres to four properties, which are fundamental to metric spaces: 
\textbf{(i) \textit{symmetry}:} $\delta(q, o) = \delta(o, q)$; 
\textbf{(ii) \textit{non-negativity}:} $\delta(q, o) \geq 0$; 
\textbf{(iii) \textit{identity}:} $\delta(q, o) = 0$ if and only if $q = o$; 
and \textbf{(iv) \textit{triangle inequality}:} $\delta(q, o) \leq \delta(q, o') + \delta(o, o')$. 
It is essential to note that in a standard metric space, the set $M$ contains data objects of a single type; hence, all objects are compared using the same metric $\delta$.

Next, a {\em multi-metric space} $(\mathcal{M}, \Delta)$ integrates multiple distinct metric spaces. 
In this context, $\mathcal{M}$ is a collection of domains $M_i$ (where $1 \leq i \leq {m}$), and $\Delta$ is a corresponding set of distance metrics $\delta_i$, each associated with its respective domain $M_i$. 
This structure enables the simultaneous modeling of multiple types of data governed by its own distance metric. 
In a multi-metric space, an object is represented as {$ o^M= \{o^i \mid 1 \leq i \leq m\}$}, where each $o^i$ resides in the domain $M_i$ of the corresponding metric space. 

\begin{figure}[tb]
\vspace{-1cm}
    \centering    \includegraphics[width=0.9\linewidth,keepaspectratio]{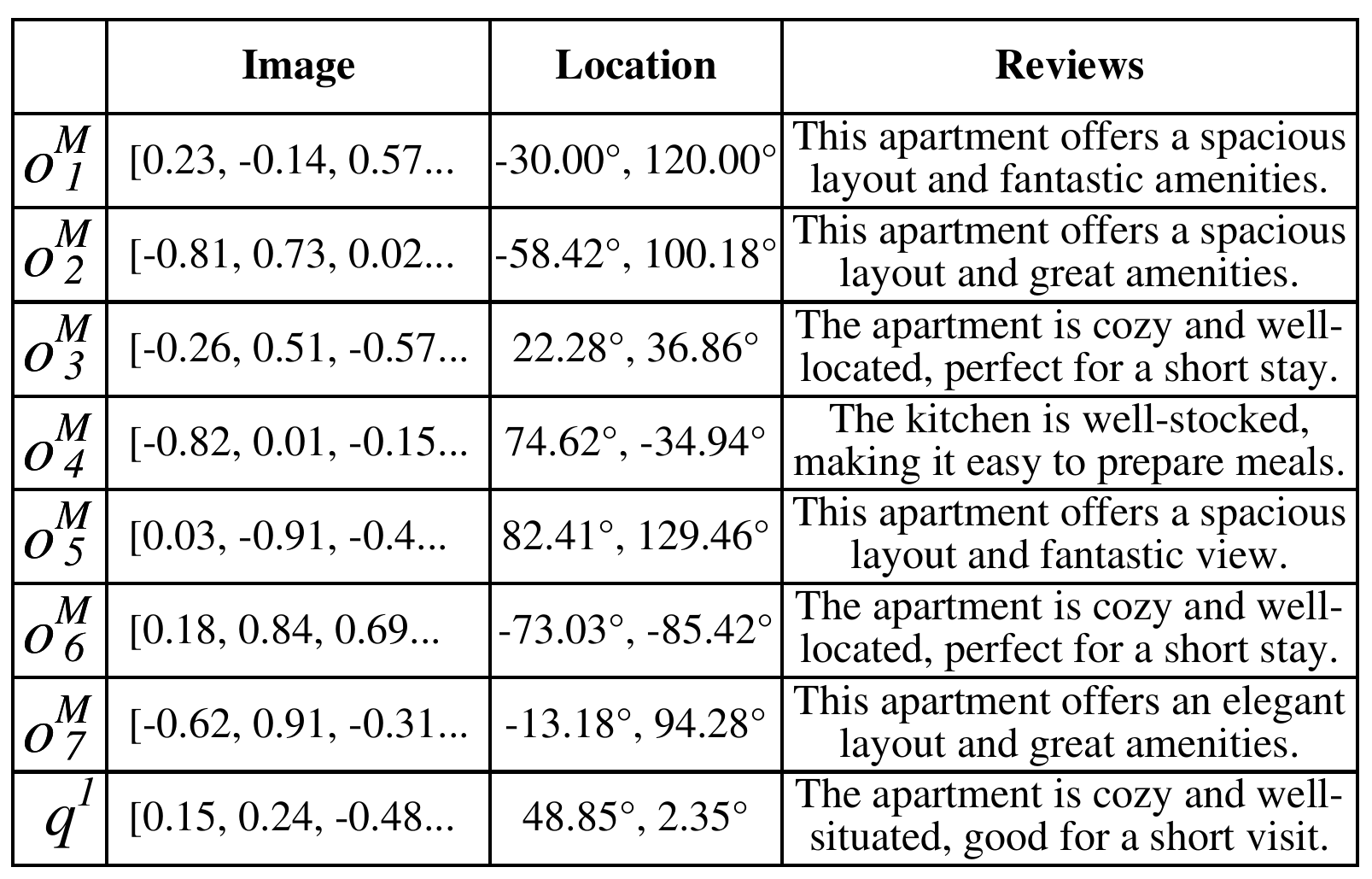} 
    \vspace{-2mm}
   \caption{A multi-metric space example}
\vspace{-6mm}
\label{fig:multi_metric_example}
\end{figure}

Figure~\ref{fig:multi_metric_example} provides an example of a multi-metric space object set $S$,
where each object represents an apartment described from multiple perspectives. These attributes include: 
{(i) image, (ii) location, and (iii) customer reviews.} 
{Specifically, the multi-metric space aggregates three individual metric spaces $(\mathcal{M}, \Delta)$ (i.e., $m = 3$): \(M_1\) is a high-dimensional vector space for images, where the $L_1$-norm is used as the distance metric; \(M_2\) is a two-dimensional vector space representing the location (latitude and longitude), where the $L_2$-norm serves as the distance metric; finally, \(M_3\) represents the string domain for text reviews, where edit distance is used.}


The overall similarity between objects in the multi-metric space integrates distances from each metric space:

\begin{definition}  
\label{defn:MultiMetricDistance}
{\bf (Multi-Metric Distance.)} 
Given a weight vector \( W = (\omega_1, \ldots, \omega_M) \), where \( \omega_i \in \mathbb{R} \) and \( \omega_i \in [0, 1] \) represent the importance of each metric space, the multi-metric distance \( \delta_W(\cdot, \cdot) \) between two objects \( \mathbf{q}^M \) and \( \mathbf{o}^M \) in a multi-metric space is defined as: \(\delta_W(\mathbf{q}^M, \mathbf{o}^M) = \sum_{\delta_i \in \Delta} \omega_i \cdot \delta_i(q^i, o^i)\). where \( \delta_i(\cdot, \cdot) \) is the distance metric associated with the \( i \)-th metric space, and \( q^i \) and \( o^i \) are the components of \( \mathbf{q}^M \) and \( \mathbf{o}^M \) in the \( i \)-th metric space.
\end{definition}

{In the definition of \textit{multi-metric distance}, \(\omega_i\) controls the contribution of the metric \(\delta_i\) to the overall similarity. In Fig.~\ref{fig:multi_metric_example}, the weight vector \(W = (1, 0, 1)\) indicates that, when performing a multi-metric similarity search, only the image and review modalities are considered. In different search scenarios, the user's intent underlying a similarity search may align with specific metric weights. For instance, some users may prioritize image similarity in the search results, corresponding to an increased weight for the image metric space. The assignment of metric weights can be provided either by the user or determined through the weight learning module in our system. To ensure balanced data ranges across different modalities, distances \(\delta_i(q^i, o^i)\) in metric space \((M_i, \delta_i)\) are normalized by dividing them by twice the median of all observed distances.}

Based on the multi-metric distance \(\delta^W(\cdot, \cdot)\), we can formally define two typical types of similarity search in multi-metric spaces: the multi-metric range query and the multi-metric \(k\)-nearest neighbor (\(k\)NN) query.

\begin{figure}[t]
\vspace{-1cm}
    \centering    
    \includegraphics[width=1\linewidth,keepaspectratio]{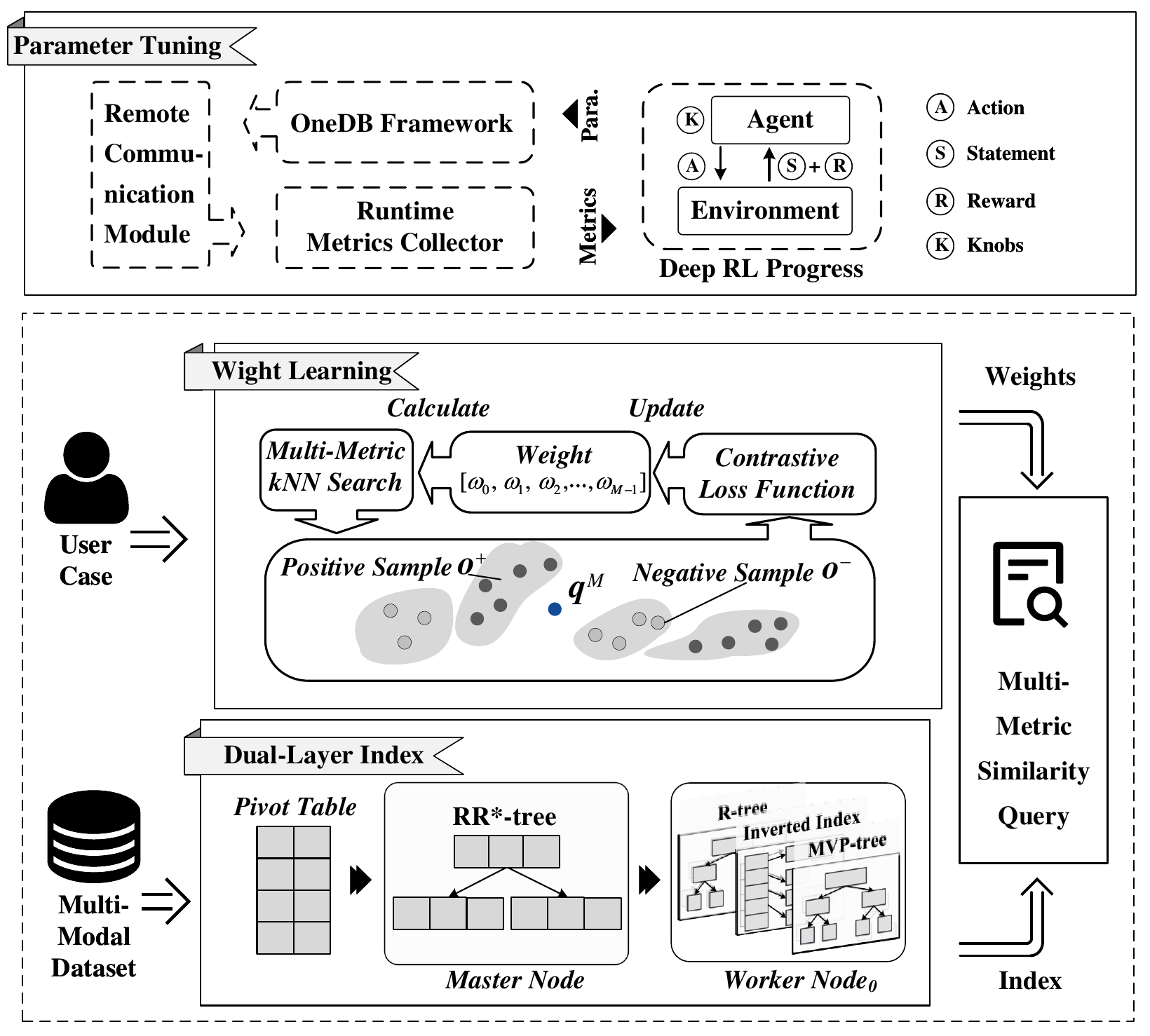} 
    \vspace{-6mm}
   \caption{OneDB framework}
   \label{fig:onedb-framework}
\vspace{-0.6cm}
\end{figure}

\begin{definition}  
\label{defn:MMRQ}
{\bf (Multi-Metric Range Query)}  
Given a set of objects \(S\), a query object \(q^M\), a weight vector \(W\), and a search radius \(r\) in a multi-metric space, a multi-metric range query (MMRQ) identifies all objects in \(S\) such that their distance to \(q^M\) is within \(r\): 
\(
\text{MMRQ}(q^M, W, r) = \{o^M \mid o^M \in S \text{ and } \delta^W(q^M, o^M) \leq r\}.
\)
\end{definition}

\begin{definition}  
\label{defn:MMkNNQ}
{\bf (Multi-Metric \(k\)NN Query)}  
Given a set of objects \(S\), a query object \(q^M\), a weight vector \(W\), and an integer \(k\) in a multi-metric space, a multi-metric \(k\)-nearest neighbor query (MM\textit{k}NNQ) seeks to find the \(k\) objects in \(S\) that are closest to \(q^M\) in terms of similarity, i.e.,
\(
\text{MMkNNQ}(q^M, W, k) = \{S' \mid S' \subseteq S, \ |S'| = k, \ \text{and } \forall s^M \in S'( \ \forall o^M \in (S - S')( \ \delta^W(q^M, s^M) \leq \delta^W(q^M, o^M)))\}.
\)
\end{definition}

In the example in Fig.~\ref{fig:multi_metric_example}, query point \(q^M\) represents an apartment with the following attribute values: an image of the apartment (embedded as a vector [0.23, -0.14, ...] using a deep learning model), an address with geographic coordinates of 48°85' N latitude and 2°35' W longitude, and a customer review stating "The apartment is cozy and well-
situated, good for a short visit." The multi-metric range query $\text{MMRQ}(q^M, W, r)$ returns $o^M_2$ and $o^M_3$, while the multi-metric \(k\)NN query returns $o^M_2$ (with $k$ = 1).

\section{The OneDB Framework: Overview}
\label{sec:framework}

The proposed OneDB offers a comprehensive solution for multi-metric {exact} similarity search, supporting user-friendly SQL-based multi-metric range queries and \(k\)NN queries. To support a large volume of data, OneDB is built on Spark.

\subsection{Overview}

Figure~\ref{fig:onedb-framework} provides an overview of the OneDB framework. 
OneDB consists of four components: i) multi-metric weight learning module, ii) dual-layer indexing module, iii) similarity search module, and iv) end-to-end parameter tuning module.

\noindent\textbf{Multi-Metric Weight Learning.}
We propose an effective multi-metric weight learning model capable of identifying the relative importance of different modalities in multi-metric similarity computations. For any pair of objects \( q^M \) and \( o^M \), the model assigns specific weights to each modality's metric space, thereby adjusting the contribution of each modality to the similarity computation. According to Definition~\ref{defn:MultiMetricDistance}, the similarity computation \(\delta_W\) between \( q^M \) and \( o^M \) combines their distances \(\delta_i\)  in each single metric space with the corresponding weights \(\omega_{i}\), i.e., the weighted distance is calculated as \(\delta_W({q}^M, {o}^M) = \sum_{\delta_i \in \Delta} \omega_i \cdot \delta_i(q^i, o^i)\).
To effectively learn the weights, 
the \textit{k}NN search method is first employed to identify positive samples that exhibit high similarity to \( q^M \), while the left objects are all negative samples. Using a contrastive loss function, the model distinguishes the distances from the target sample to the negative and positive samples separately in the feature space. Next, the model adaptively adjusts the weights of each modality to reflect their relative importance in the similarity metric through iterative training. Finally, the learned weights are applied to similarity search tasks, improving retrieval performance.

\noindent\textbf{Dual-Layer Indexing.} We design a dual-layer index structure for a distributed environment. At the global level, a global $RR^*$-tree is constructed on the master node to distribute data to worker nodes evenly. At the local level, a multi-metric space index forest is constructed on each worker node, consisting of separate indexes built for each data modality. Our system supports three types of separate indexes, including R-tree (used to index low-dimensional vectors) and {mvp-tree} (used to index high-dimensional vectors), {for text data, we build inverted indexes for efficient candidate filtering before applying edit distance~\cite{sun2019balance}.} Although only three types of indexes are supported in our current implementation, our system is flexible enough to integrate any kind of index structure designed for various data types. Additionally, an automatic index selection method can be designed to select the proper index structure for each data modality.
This indexing scheme allows OneDB to maintain a low cost for global index construction while enabling efficient similarity search locally.

\noindent\textbf{Similarity Search.} 
Based on the dual-layer index structure, efficient multi-metric range query and multi-metric $k$NN query methods are proposed. Effective pruning lemmas are also designed at both global and local levels during the search.

\noindent\textbf{End-to-End Parameter Tuning.}
Due to the various parameters involved in our system, it is difficult for users to set proper values for all parameters, especially in a dynamic environment. OneDB integrates a reinforcement learning-based adaptive tuning method to adjust the tunable configuration knobs of the search framework dynamically. Specifically, OneDB involves an agent interacting with the environment to incrementally optimize the knob settings at each time step \( t \) through incremental training. At each state \( s_t \), the agent combines the current features with an externally computed reward \( r_t \) and selects an appropriate action \( a_t \) based on the current policy \( \tau \), which corresponds to adjusting the knobs. By comparing the performance differences between the previous state \( s_t \) and the subsequent state \( s_{t+1} \), the agent iteratively computes the reward function to effectively capture system performance changes, guiding parameter settings for high performance.

\vspace{-0.1cm}

\subsection{SQL Statements}
OneDB offers a flexible and user-friendly interface. We extend Spark SQL to support two types of typical similarity searches and offer specific SQL usage examples. {These operators can be freely combined with standard SQL features, inheriting full structured query support from Spark SQL.}

\noindent\textbf{Multi-Metric Range Query.} We use {\sf SELECT} statement with ODBRANGE to retrieve objects from table T that have a multi-metric distance less than \(r\) from the query point \(q^M\), where \textit{W} represents the query's weight vector. Note that users can either manually set the weight vector or obtain high-quality weights via the multi-metric weight learning module.


\lstset{
  mathescape=true,
  basicstyle=\bfseries\sf\small, 
  backgroundcolor=\color{yellow!10}, 
  columns=fullflexible, 
  frame=single, 
  rulecolor=\color{red}, 
  frameround=tttt 
}

\begin{lstlisting}
SELECT * FROM T WHERE T.col IN ODBRANGE($q^M$, W, r)
\end{lstlisting}


\noindent\textbf{Multi-Metric \(k\)NN Query.} We use {\sf SELECT} statement with ODBKNN to find the top \(k\) objects in table T that are closest to the query point \(q^M\) in terms of multi-metric distance.

\begin{lstlisting}[mathescape]
SELECT * FROM T WHERE T.col IN ODBKNN($q^M$, W, k)
\end{lstlisting}


\vspace{-3mm}
\section{Multi-Metric Weight Learning}
\label{sec:weight_learning}
We introduce a multi-metric weight learning model that provides proper modality weights for multi-metric data distance computation. The critical challenges of multi-metric weight learning model lie in two aspects: i) how to generate training data with limited query cases provided by users; and ii) how to efficiently and effectively learn the weights. Our lightweight learning model is based on contrastive learning. To begin, given an anchor object (i.e., a query point), we acquire its positive and negative examples. Subsequently, we construct and minimize a contrastive loss function to learn the relative weights. We first provide the training data generation method and loss function in the following.

\vspace {-2mm}
\subsection{Training Data Generation}

The users provide a set of \( N \) query points and their corresponding true \( k \)-nearest neighbors, denoted as \( G^+ \). However, in real-life applications, $N$ is small, as we cannot ask users to provide a large number of query cases. To generate positive and negative samples during each iteration of training, we first perform multi-metric \( k \)-NN queries on $N$ query points using the weight vector \( \bar{W} \) obtained from the previous iteration, resulting a multi-metric \( k \)-NN search result set $F$. Then in each iteration, positive and negative samples are generated.

\noindent\textbf{Positive Samples.} The intersection of true \( k \)NNs provided by users and \( k \)NNs obtained by $\text{MMkNNQ}(q^M, \bar{W}, k)$ are considered as positive samples \( o^{M+} \):
\(
o^{M+} \in G^+(q^M) \cup F(q^M)
\). 

\noindent\textbf{Negative Samples.}  The non-true \( k \)NNs in the result set \( F \) are treated as negative samples \( o^{M-} \):
\(
o^{M-} \in F(q^M) \setminus G^+(q^M)
\).


\vspace {-2mm}
\subsection{Loss Function}
Our loss function aims to make the query point more similar to its positive samples \( o^{M+} \) while increasing the difference from the negative samples \( o^{M-} \), which is defined below.
\begin{equation}
     L = \frac{1}{m} \sum_{q^{M} \in Q} -\log \frac{\sum_{o^{M+}} e^{\delta(q^M, o^{M+})}}{\sum_{o^{M+}} e^{ \delta(q^M, o^{M+})} + \sum_{o^{M-}} e^{\delta(q^M, o^{M-})}} 
\end{equation}
\noindent
where $m$ denotes the number of metric spaces, $\delta(q^M, o^{M+})$ denotes the multi-metric distance between the query point $q^M$ to its positive sample $o^{M+}$, while $\delta(q^M, o^{M-})$ denotes the multi-metric distance between the query point $q^M$ to its negative sample $o^{M-}$.


\begin{algorithm}[t]
\small
\SetNlSty{small}{}{:}
\SetCommentSty{footnotesize}
\LinesNumbered
\DontPrintSemicolon
\caption{Global Index Construction}
\label{algo:rrtreeconstruction}
\KwIn{Dataset $D$ in the multi-metric space $\mathcal{M} = \{M_1, M_2, \dots, M_{m}\}$, the number $m$ of metric spaces, Reference set $R = \{p_1, p_2, \dots, p_{m}\}$}
\KwOut{Constructed $RR^*$-tree with $R^*$-tree indexing}
    \ForEach{object $o^M \in D$}
    {
        $v_o \gets$ Initialize an empty feature vector\;
        \ForEach{ metric space $M^i \in \mathcal{M}$}
        {
            {
                $v_{o}[i] \gets \delta^i(p^i_{i}, o^i)$\; 
            }
        }
        Insert $v_o$ into the $R^*$-tree\;
    } 
    \KwRet the constructed $RR^*$-tree\;
\end{algorithm} 

The loss function is to optimize the weights in the multi-metric space, \( W = [\omega_0, \omega_1, \dots, \omega_{m-1}] \), by minimizing \( L \), thereby enhancing the similarity between the query object \( q^M \) and its positive sample set, while reducing the similarity between the query object and the negative sample set. The initial weights \( W \) are randomly generated.
By adjusting the weights \( \omega_i \), the model aims to increase the similarity distances of positive samples, $e^{\delta(q^M, o^{M+})}$, while decreasing the similarity distances of negative samples, \( e^{\delta(q^M, o^{M-})} \). During the optimization process, if the similarity of a negative sample exceeds that of a positive sample in the query results, the loss function \( L \) will increase significantly, reflecting the inaccuracy of the current weights. Consequently, gradient descent is applied to update \( W \), gradually reducing the value of \( L \). This ensures that the similarity between the query object and the positive samples increases while the similarity to negative samples decreases, leading to an optimized set of weights.


The proposed multi-metric weight learning approach effectively captures the relative weights between different modalities. Combining multi-metric distance computation with the modality weight learning module allows OneDB to seamlessly integrate information from various modalities into the similarity computation process. Additionally, different from deep learning methods that are costly, our weight learning model only constructs a contrastive loss function and minimizes it to learn the relative weights, which is lightweight. As a result, our model can converge to an optimal weight configuration with fewer iterations. In our experiments,  training under 100 seconds is able to converge to 90\% recall.



\vspace{-2mm}
\section{Multi-Metric Similarity Search}
\label{sec:index&search}
We proceed to introduce the dual-layer index structure and then present efficient multi-metric similarity search methods.
\begin{figure*}[tb]
\vspace{-12mm}
    \centering    \includegraphics[width=1\linewidth,keepaspectratio]{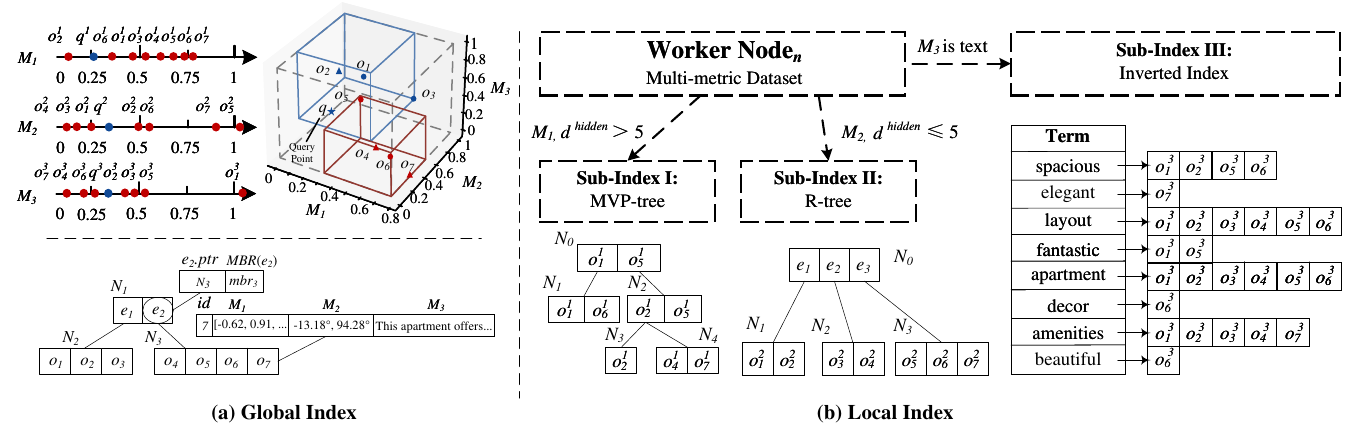} 
    \vspace{-6mm}
   \caption{An example of dual-layer index}
\vspace{-5mm}
\label{fig:two-layer-index}
\end{figure*}

\vspace{-2mm}
\subsection{Dual-Layer Index}
\label{subsec:index}
To ensure an efficient multi-modal similarity search, we design a dual-layer index. Specifically, we utilize an effective multi-metric space indexing structure, the $RR^*$-Tree ({Reference-$R^*$-Tree}) at the master node, which facilitates effective partitioning via pivot mapping with minimum bounding rectangles (MBR) at the global level. At the worker nodes, we construct a index forest for each partition, i.e., a specific index is constructed for each metric space.

\noindent\textbf{Global Index.} In the master node, we construct the global $RR^*$-Tree index to evenly partition the multi-modal data and then distribute the data partitions to worker nodes. 
Fig.~\ref{fig:two-layer-index} illustrates the global index structure built on the multi-metric space dataset in Fig.~\ref{fig:multi_metric_example}. First, we select a pivot $p$ for each metric space $M_i$ to map the metric space to one-dimensional vector space, where each object $o^i$ is denoted via its distance $\delta_i(p^i, o^i)$ to pivot $p^i$. In this example, we chose $o_2$ as the pivot for $M_1$, $o_4$ for $M_2$, and $o_7$ for $M3$. The tree mapped one-dimensional vector spaces are illustrated in Fig.~\ref{fig:two-layer-index}(a), resulting in a three-dimensional vector space shown in Fig.~\ref{fig:two-layer-index}(b). In the sequel, an $R^*$-tree is built on the data in the mapped three-dimensional vector space, as shown in Fig.~\ref{fig:two-layer-index}(c). 
Each entry $e$ in the non-leaf node of $R^*$-tree stores (i) a pointer $e.ptr$ to the root node of its sub-tree and (ii) the MBR $MBR(e)$ to contain all the data points in its sub-tree. For example, non-leaf entry $e_2$ in the root node stores the pointer to the node $N_3$ and the blue $mbr_3$ that contains all the data points (i.e., $o_4$, $o_5$, $o_6$, and $o_7$). Each entry in the leaf node of $R^*$-tree stores (i) the object identifier and (ii) the mapped vector $[\delta_1(p_1^1,o^1),\delta_2(p_2^2,o^2),\ldots,\delta_m(p_m^m,o^m)]$ of this object.  Additionally, each leaf node corresponds to a partition.

$R^*$-tree is a balanced tree, which recursively partitions the data to minimize the overlap between partitions, and thus, compact partitions are obtained (i.e., objects in each partition are similar to each other). 
The objects in each leaf node form a data partition. Note that we adopt the pivot selection method FFT (farthest-first-traversal)\cite{mao2012pivot}, as it is simple and effective. Additionally, we use only one pivot per metric space to prevent high dimensionality at the global level, thereby achieving better partitioning quality and enhanced search performance.

Algorithm~\ref{algo:rrtreeconstruction} presents the detailed pseudo-code for constructing the global index. For each data object $o^M \in D$ in multi-metric space (line 1), it computes its distance $\delta_i(o^i, p_i^i)$ to the corresponding pivot $p_i$ in each metric space $M_i$ to form a mapped vector $v_o$ (lines 2--4), and then inserts the computed vector $v_o$ into the $R^*$-tree (line 5). Finally, the constructed $R^*$-tree is returned (line 6). 

\noindent\textbf{Discussions.} $RR^*$-Tree construction ignores the weights of metric spaces. This is because different users may have various weights; if we partition the multi-metric space dataset considering weights, it is difficult to obtain satisfactory partitioning for every user. In our global index, we assume that each metric space has the same weight, so we consider all metric spaces equally in order to support various weights raised by different users. By doing this, we can cluster similar objects in the same partition, enabling good pruning ability in a general case.


\begin{algorithm}[t]
\small
\SetNlSty{small}{}{:}
\SetCommentSty{footnotesize}
\LinesNumbered
\DontPrintSemicolon
\caption{Local Index Construction}
\label{algo:localindexconstruction}
\KwIn{Dataset $S$ with $N$ points, the number $m$ of metric spacers}
\KwOut{Indexes $\{I_1, I_2, \dots, I_m\}$ for each metric space}
    \ForEach{$o^M_j \in S, \, j \in \{1, 2, \dots, N\}$}
    {
        $\{o_j^1, o_j^2, \dots, o_j^M\} \gets \text{Decompose}(o_j^S)$ \tcp{Decompose into separated metric spaces}
    }
    \ForEach{$i \in \{1, 2, \dots, m\}$}
    {
        $C_i \gets \{o_1^i, o_2^i, \dots, o_N^i\}$ \tcp{Dataset in metric space $M_i$}
        
        \If{modality of $M_i$ is text}
        {
            $I_i \gets \text{InvertedIndex}(C_i)$ \tcp{Inverted index for text}
        }
        \Else
        {
            
            $\mu \gets \text{Mean}(M_i)$, $\sigma^2 \gets \text{Variance}(M_i)$, $d_i^{\text{hidden}} \gets \frac{\mu^2}{2\sigma^2}$ 
            
            

            \If{$d_i^{\text{hidden}} > 5$}
            {
                $I_i \gets \text{MVP-tree}(C_i)$ \tcp{MVP-tree for high $d_i^{\text{hidden}}$}
            }
            \Else
            {
                $I_i \gets \text{R-tree}(C_i)$ \tcp{R-tree for low $d_i^{\text{hidden}}$}
            }
        }
        
        Construct $I_i$ for $C_i$\;
    }
    \KwRet $\{I_1, I_2, \dots, I_m\}$\;
    
\end{algorithm}

\noindent\textbf{Local Index.}
 Next, the multi-modal data objects partitioned by the global index are assigned to the corresponding worker nodes. Note that a worker node may be allocated multiple partitions, resulting in the construction of several relatively independent local indexes for the same worker. For the local index, we construct separated indexes for each metric space, enabling better pruning on each metric space by utilizing the specific characteristics of each data type. We determine the specific index type based on the hidden dimension of each modality. The hidden dimension of the data is computed as \(d^{\text{hidden}} = \frac{\mu^2}{2\sigma^2}\)~\cite{chen2022indexing}, where \(\mu\) represents the mean distance between data objects, and \(\sigma^2\) denotes the distance variance. In our system, we use R-tree (designed for low-dimensional vectors, $d^{\text{hidden}} \le 5$), {MVP-tree} (designed for high-dimensional vectors, $d^{\text{hidden}} \textgreater 5$), and inverted index (designed for text). However, our system is flexible to support various indexes. 
 

 {The detailed local index construction is presented in Algorithm \ref{algo:localindexconstruction}. Initially, the algorithm takes as input a dataset $S$ with $N$ data points and the number $m$ of metric spaces. It first literately decomposes each data point into $m$ separate components (lines 1--2). Subsequently, for each metric space $M_i$ $(1\le i \le m)$, the algorithm constructs a dataset $C_i$ forming by $o_j^i$ $(1\le j \le N)$ (line 4). If the modality of $C_i$ is textual, an inverted index is constructed for that modality (lines 5--6).
Otherwise, the algorithm computes the hidden dimension of the modality (line 8).
 Based on the hidden dimension, the algorithm determines whether to use an MVP-tree for high-dimensional data  (if the hidden dimension is greater than 5) or an R-tree for lower-dimensional data (lines 9--12). Finally, a local index forest is constructed and returned (lines 13--14).}

\vspace{-2mm}
\subsection{Multi-Metric Range Query}
Based on the constructed two-layer index, we present pruning lemmas and the corresponding algorithm for multi-metric range query.
For a multi-metric range query $\text{MMRQ}(q^M, W, r)$, the query region is first mapped to an MBR at the global level. The mapped region $R_q = [\delta_i(q^i, p_i^i)-r, \delta_i(q^i, p_i^i)+r | 1 \le i \le m]$. Here, $p_i$ are pivots selected to construct the global $RR^*$-tree as stated in Section~\ref{subsec:index}. The pruning lemma in the global index can be developed.

\begin{lemma}
\vspace{-2mm}
\label{lemmagloabl}
Given a non-leaf entry $e$ in the global $RR^*$-tree, for all $i$ $(w_i \neq 0)$, $R_q[i]$ does not intersect with $e.MBR[i]$ (i.e., $R_q[i].min$ $\textgreater$ $e.MBR[i].max$ or $R_q[i].max$ $\textless$ $e.MBR[i].min$), then $e$ can be pruned. Here, $R_q[i]$ = $ [\delta_i(q^i, p_i^i)-r, \delta_i(q^i, p_i^i)+r]$.
\end{lemma}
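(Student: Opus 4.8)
The plan is to show that if the hypothesis holds, then no object stored in the subtree of $e$ can be a valid answer to $\text{MMRQ}(q^M, W, r)$, hence $e$ is safely prunable. First I would fix an arbitrary object $o^M$ indexed in the subtree rooted at $e.ptr$; by the construction in Algorithm~\ref{algo:rrtreeconstruction}, the mapped vector of $o^M$ is $v_o = [\delta_1(p_1^1, o^1), \ldots, \delta_m(p_m^m, o^m)]$, and since $e.MBR$ is the minimum bounding rectangle of all such vectors in the subtree, we have $v_o[i] \in [e.MBR[i].min,\, e.MBR[i].max]$ for every $i$. The key step is a per-coordinate lower bound: for each metric space $M_i$, the triangle inequality on $(M_i, \delta_i)$ gives
\[
\delta_i(q^i, o^i) \ \geq\ |\delta_i(q^i, p_i^i) - \delta_i(p_i^i, o^i)| \ =\ |\delta_i(q^i,p_i^i) - v_o[i]|.
\]
If the interval $R_q[i] = [\delta_i(q^i,p_i^i)-r,\ \delta_i(q^i,p_i^i)+r]$ does not intersect $[e.MBR[i].min, e.MBR[i].max]$, then in particular $v_o[i] \notin R_q[i]$, so $|\delta_i(q^i,p_i^i) - v_o[i]| > r$, and combining with the displayed inequality yields $\delta_i(q^i, o^i) > r$ for that $i$.

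Next I would assemble the global bound. By Definition~\ref{defn:MultiMetricDistance}, $\delta^W(q^M, o^M) = \sum_{j} \omega_j \cdot \delta_j(q^j, o^j)$. Since all $\omega_j \geq 0$ and all $\delta_j(\cdot,\cdot) \geq 0$ by non-negativity, we can drop every term except one index $i$ with $\omega_i \neq 0$ for which the non-intersection hypothesis holds (the statement guarantees this for \emph{all} $i$ with $\omega_i \neq 0$, so at least such an $i$ exists whenever $W$ is not the zero vector — the degenerate all-zero-weight case being vacuous). This gives $\delta^W(q^M, o^M) \geq \omega_i \cdot \delta_i(q^i, o^i)$. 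A small care point here: to conclude $\delta^W(q^M, o^M) > r$ we actually want the bound $\delta_i(q^i, o^i) > r/\omega_i$ rather than $> r$; since $\omega_i \in [0,1]$, the bound $\delta_i(q^i,o^i) > r$ we derived already implies $\omega_i \cdot \delta_i(q^i, o^i) > \omega_i r$, which is weaker than needed unless $\omega_i = 1$. The cleanest fix is to observe that the pruning is correct under the (standard) convention that distances are normalized and the lemma is applied with the mapped query box defined relative to each active dimension's own scaled radius; alternatively, one restates the per-coordinate conclusion directly in terms of the \emph{weighted} contribution, which is what the search algorithm actually uses.

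The main obstacle I anticipate is precisely this bookkeeping between the unweighted per-coordinate slack and the weighted aggregate distance: the lemma as stated uses a single global radius $r$ in $R_q[i]$ for every dimension, which is only sound if either the weights are absorbed into the $\delta_i$ (via the median-normalization mentioned after Definition~\ref{defn:MultiMetricDistance}) or the query box is inflated by $1/\omega_i$ per dimension. Once that convention is pinned down, the rest is a routine application of the triangle inequality and non-negativity of the weighted sum, exactly as in the classical pivot-filtering / $RR^*$-tree correctness arguments. I would therefore spend most of the write-up making the normalization convention explicit and then give the two-line triangle-inequality bound per coordinate, followed by the one-line summation argument.
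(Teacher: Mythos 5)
Your plan follows essentially the same route as the paper's proof: an object $o^M$ in the subtree of $e$ has its mapped coordinate $\delta_i(p_i^i,o^i)$ inside $e.MBR[i]$, the triangle inequality gives a per-dimension lower bound (your single absolute-value inequality $\delta_i(q^i,o^i)\ge|\delta_i(q^i,p_i^i)-\delta_i(p_i^i,o^i)|$ compactly covers the paper's two explicit cases $R_q[i].min>e.MBR[i].max$ and $R_q[i].max<e.MBR[i].min$), yielding $\delta_i(q^i,o^i)>r$, and then one aggregates to exclude $o^M$ from the range result. The ``care point'' you raise about the weighted aggregation is genuine, and it is worth knowing that the paper's own proof does not resolve it: it simply writes $\delta(o,q)\ge\delta_i(o^i,q^i)>r$, implicitly treating the active dimension's weight as $1$, whereas with $\omega_i\in[0,1]$ and $\delta^W(q^M,o^M)=\sum_j\omega_j\delta_j(q^j,o^j)$ one only obtains $\delta^W(q^M,o^M)\ge\omega_i\delta_i(q^i,o^i)>\omega_i r$. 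The cleanest way to close this within the lemma as stated is the one you gesture at but do not carry out: invoke the hypothesis for \emph{all} active dimensions simultaneously, so that $\delta^W(q^M,o^M)=\sum_{i:\omega_i\neq0}\omega_i\delta_i(q^i,o^i)>r\sum_{i:\omega_i\neq0}\omega_i$, which exceeds $r$ precisely when the active weights sum to at least one (e.g., under a normalization of $W$); otherwise, as you note, the per-dimension radius must be inflated to $r/\omega_i$ or the weights absorbed into the normalized $\delta_i$. So your approach is sound and matches the paper's argument, and the gap you flag is a real one shared by the published proof rather than a defect specific to your write-up; spelling out the summation over all active dimensions (or the normalization convention) would make your version strictly more rigorous than the original.
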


\begin{proof}
\vspace{-2mm}
For each data point $o$ in $e$, $e.MBR[i].min \le \delta_i(o^i, p_i^i) \le e.MBR[i].max$, where $e.MBR[i].min$ and $e.MBR[i].max$ denotes the lower bound and upper bound of $e.MBR$ in $i$-the dimension respectively. For a metric space $M_i$ with $(w_i \neq 0)$, if $R_q[i]$ does not intersect with $e.MBR[i]$, we can get (1) $R_q[i].min$ $\textgreater$ $e.MBR[i].max$ or (2) $R_q[i].max$ $\textless$ $e.MBR[i].min$. 

(1) If $R_q[i].min$ $\textgreater$ $e.MBR[i].max$, then $\delta_i(q^i, p_i^i)-r > e.MBR[i]$ $.max\ge \delta_i(o^i, p_i^i)$. According to the triangle-inequality, $\delta_i(o^i, q^i) \ge \delta_i(q^i,$ $ p_i^i)$ $- \delta_i(o^i, p_i^i) > r$. Thus,  $\delta(o, q) \ge \delta_i(o^i, q^i) > r$, implying $o$ cannot be the result.

(2) If  $R_q[i].max$ $\textless$ $e.MBR[i].min$, then $\delta_i(q^i, p_i^i)+r< e.MBR[i]$ $.min\le \delta_i(o^i, p_i^i)$. 
According to the triangle-inequality, $\delta_i(o^i, q^i) \ge \delta_i(q^i,$ $ p_i^i)$ $- \delta_i(o^i, p_i^i) > r$.  Thus,  $\delta(o, q) \ge \delta_i(o^i, q^i) > r$, implying $o$ cannot be the result.
\vspace{-3mm}
\end{proof}


At the global level, a combined index is constructed assuming that the weights of all metric spaces are the same; thus, we cannot use the weight vector $W$ for fine-grained pruning. In the following, we present a pruning lemma on operated indexes at the local level.

\begin{lemma}
\vspace{-2mm}
\label{lemmaweight}

Let \( \Delta \) represent the set of distance metrics. An object \( o^M \) is included in the result set of \( \text{MMRQ}(q^M, W, r) \) if there exists at least one metric \( \delta_i \in \Delta \) such that \( \omega_i > 0 \) and \( \delta_i(q^M, o^M) \leq \frac{r}{\sum_{\delta_i \in \Delta} \omega_i} \).
\vspace{-3mm}
\end{lemma}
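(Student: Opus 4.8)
The plan is to read this as what it actually is --- a \emph{completeness/pruning} lemma for the separate local indexes --- and to prove the logically sound direction behind it. As worded the statement reads like a \emph{sufficient} condition (``$o^M$ is in the result \emph{if} some metric is close''), but that reading is false: one small component distance cannot force the weighted sum below $r$ if another heavily weighted component is large. The content that genuinely supports local pruning is the \emph{converse}, that the existence condition is \emph{necessary} for membership. Concretely, I would prove: if $o^M \in \text{MMRQ}(q^M, W, r)$, then there exists at least one $\delta_i \in \Delta$ with $\omega_i > 0$ and $\delta_i(q^i, o^i) \le r / \sum_{\delta_j \in \Delta} \omega_j$. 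This is exactly the guarantee needed so that a union of per-metric range searches, each run with radius $r / \sum_j \omega_j$, never discards a genuine result; the contrapositive then furnishes the safe pruning rule --- an object far in \emph{every} positively-weighted metric cannot be a result.

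The core argument is a short averaging-by-contradiction step. By Definition~\ref{defn:MultiMetricDistance} and the MMRQ definition, membership means $\delta^W(q^M, o^M) = \sum_{\delta_i \in \Delta} \omega_i \cdot \delta_i(q^i, o^i) \le r$. First I would observe that metrics with $\omega_i = 0$ contribute nothing to this sum and nothing to the total $\sum_j \omega_j$, so both may be restricted to the index set $P = \{\, i : \omega_i > 0 \,\}$, giving $\sum_{i \in P} \omega_i = \sum_j \omega_j$. Then, assuming for contradiction that the threshold fails for \emph{every} $i \in P$, i.e. $\delta_i(q^i, o^i) > r / \sum_j \omega_j$ throughout $P$, I would multiply each strict inequality by $\omega_i > 0$ and sum over $P$ to obtain $\sum_{i \in P} \omega_i \delta_i(q^i, o^i) > \bigl( r / \sum_j \omega_j \bigr) \sum_{i \in P} \omega_i = r$. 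This contradicts $\delta^W(q^M, o^M) \le r$, so at least one $i \in P$ must meet the bound, which is the claim.

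I expect the main obstacle to be conceptual rather than computational: I must make explicit that I am establishing the \emph{necessary} direction, flag that the literal ``if'' is an imprecise rendering of a necessary condition, and justify why necessity is precisely what the local search needs. A secondary point requiring care is the bookkeeping around zero-weight metrics, so that the denominator $\sum_j \omega_j$ and the restricted sum $\sum_{i \in P} \omega_i$ cancel cleanly; this cancellation is what makes the threshold $r / \sum_j \omega_j$ the \emph{tightest} admissible per-metric radius --- the boundary object with $\delta_i(q^i, o^i) = r / \sum_j \omega_j$ for all $i \in P$ satisfies $\delta^W = r$ yet would be missed by any strictly smaller radius, confirming that no sharper threshold preserves completeness.
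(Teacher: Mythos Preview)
Your proposal is correct and matches the paper's own approach: the paper omits the proof entirely, saying only that it ``can be derived by the pigeon theorem,'' and your averaging-by-contradiction argument is precisely that pigeonhole step spelled out. Your observation that the lemma should be read as a \emph{necessary} rather than sufficient condition is also right and worth flagging, since the paper's text does not address the imprecise ``if'' wording.
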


The above lemma can be derived by the pigeon theorem~\cite{zhu2022desire}; thus, it is omitted. Lemma~\ref{lemmaweight} can also be easily extended to non-leaf entries in each local level index.
{Based on the above two lemmas, we design a multi-metric range query method tailored for distributed environments. First, we apply Lemma~\ref{lemmagloabl} to perform partition filtering at the global level, selecting qualified candidate partitions and reducing query costs. Next, we assign the query tasks to the worker nodes corresponding to the candidate partitions and use Lemma~\ref{lemmaweight} to find candidate data points in each metric space efficiently. During the verification phase, the candidate set is validated using the multi-metric distance, finding the final results.

\vspace{-2mm}
\subsection{Multi-Metric \textit{k}NN Query}
Directly aggregating \(\textit{k}\)NN answers from each queried metric space does not guarantee correctness. To address this, DESIRE~\cite{zhu2022desire} processes MM\(\textit{k}\)NN queries by first performing \(\textit{k}\)NN searches in one or more metric spaces to establish a distance boundary, which is then used to conduct range queries across all the queried spaces to obtain the final results. However, this approach struggles to balance efficiency and pruning capability, as leveraging more information from the queried spaces for pruning comes at the expense of increased search costs. To overcome this limitation, we propose a two-phase MM\(\textit{k}\)NN strategy. In the first phase, we determine an upper bound \(\bar{dis_k}\) for the \(k\)-th nearest neighbor across multiple metrics by searching the partition most similar to the query. In the second phase, we convert the MM\(\textit{k}\)NN query into a range query, using \(\bar{dis_k}\) as the radius for \text{MMRQ}.

Specifically, the proposed strategy begins by identifying the partition whose center is most similar to \(q^M\) in the global index, serving as a sample of the multi-modal objects. Next, the corresponding worker nodes search for \(k\) candidate results within the local index of each modality in that partition. After reassembling and verifying the modalities, an approximate \(k\)-th nearest neighbor, \(\bar{o^M_{k}}\), is obtained. 
The multi-metric distance between \(\bar{o^M_{k}}\) and the query point \(q^M\) is then calculated to derive the upper bound of the \(k\)-nearest neighbor distance, \(\bar{dis_k}\). This upper bound is subsequently used as the search radius for the multi-metric range query, ensuring the correctness of the \(k\)-th nearest neighbors by accurately pruning irrelevant candidates and refining the search process to determine the exact \(k\)-th nearest neighbors of the given query.

Notably, the sampling scale influences both the efficiency of the \(\textit{k}\)NN computation and the accuracy of the estimated upper bound \(\bar{dis_k}\), while the precision of \(\bar{dis_k}\) affects the number of distance calculations required during the verification phase. Therefore, the query cost is mainly composed of three parts: (1) the size of partitions in worker nodes, which determines the sampling scale, (2) the quality of the sampled top \(k\) candidate results, and (3) the multi-modal range query efficiency using the distance upper bound \(\bar{dis_k}\)

\vspace{-2mm}
\section{End-to-End Tuning}
\label{sec:tuning}
In real-world distributed environments, communication overhead and computational capacity between nodes are not only affected by network latency but also influenced by other complex factors, such as system workloads, bandwidth fluctuations, dynamic allocation of hardware resources, and even changes in network topology. These complex factors make it extremely difficult to establish a cost model for distributed similarity search systems. Additionally, the multi-modal data in our problem makes the cost model development much harder.
As a result, mathematical cost models struggle to effectively analyze the costs of distributed multi-modal data search systems. Motivated by it, we introduce an end-to-end reinforcement learning (RL) based tuning model.
\vspace{-2mm}
\subsection{RL for Tuning}

We incorporate reinforcement learning  (RL), which enables the optimization of continuous parameter spaces through trial-and-error learning. With reinforcement learning's feedback mechanism, the system can learn in an end-to-end manner, which is easy to use. 
{Compared to alternative methods such as grid search or supervised learning, RL is particularly suitable for our setting. First, no labeled data is available for learning optimal configurations. Second, the high-dimensional and discrete tuning space makes exhaustive search infeasible. RL enables the system to learn effective strategies through limited interactions, and has proven successful in database tuning scenarios~\cite{zhang2021cdbtune+}.}

During the tuning process, the RL agent serves as the optimization component, interacting with the environment (i.e., our search system, OneDB). The agent observes the current state of the OneDB, represented as a vector of internal metrics, and adjusts specific tuning knobs (actions) to maximize the reward. 
Specifically, at time \( t \), the state of the agent consists of the current values of all tuning knobs. After applying the recommended configuration, the agent evaluates the system's performance through multiple queries (e.g., \( k \)-nearest neighbor and range queries), with execution time as a key reference for the system's state.

The RL based tuning is iterative. In each iteration, 
the agent compares the system's performance with the previous state (the initial state or the state in the previous iteration) to calculate the reward. The agent continuously adjusts the knob configurations according to the rewards, eventually converging to an optimized configuration. The final policy is implemented through a deep neural network, which maps the current system state to the optimal tuning knob configuration, ensuring the highest possible reward.

As the tuning in the OneDB involves high-dimensional states and continuous actions, Deep Deterministic Policy Gradient (DDPG), a policy-based deep reinforcement learning algorithm, is an ideal choice for addressing such complex scenarios. DDPG combines features of Deep Q-Networks (DQN) and actor-critic algorithms. It is able to learn continuous action values directly from the current state without the need to compute and store Q-values for all discrete actions as in DQN.

The DDPG architecture consists of two main components: the actor and critic networks. The actor network generates actions (i.e., adjustments to the system parameters), while the critic network evaluates the actions produced by the actor and provides feedback to guide the actor's learning. By sampling from the replay buffer, which stores state, action, reward, and next-state tuples, the critic network updates its parameters using the Q-learning algorithm, while the actor network is updated via policy gradients.

\vspace{-4mm}
\subsection{Reward Function}

The reward function in OneDB's tuning is designed to capture the performance improvements resulting from each tuning adjustment. It measures performance changes based on two metrics: the difference from the initial configuration \( \Delta Q^{t}_{0} \) and the difference from the previous time step \( \Delta Q^{t}_{t-1} \). 
Following the principles of reinforcement learning, we have designed the following reward function:
\begin{equation}
r = \text{sign}(\Delta Q^{t}_{0}) \cdot \left( (1 + |\Delta Q^{t}_{0}|)^2 - 1 \right) \cdot \left| 1 + \text{sign}(\Delta Q^{t}_{0}) \cdot \Delta Q^{t}_{t-1} \right|
\end{equation}

The sign function, denoted as $\text{sign}(x)$, returns 1 when $x > 0$, -1 when $x < 0$, or 0 when $x = 0$.
We offer three strategies for tuning the reward function's learning speed.

(i) An exponentially weighted reward function amplifies the reward for significant performance improvements using an exponential function, magnifying larger performance gains. 
\begin{equation}
    r = \text{sign}(\Delta Q_{t \to 0}) \cdot \left( e^{|\Delta Q_{t \to 0}|} - 1 \right) \cdot \left| e^{\text{sign}(\Delta Q_{t \to 0}) \cdot \Delta Q_{t \to t-1}} \right|
\end{equation}

(ii) A logarithmically weighted reward function smooths penalties for performance degradation, thereby reducing the impact of minor fluctuations significantly. 
\begin{equation}
\begin{split}
r = \text{sign}(\Delta Q_{t \to 0}) \cdot \big( (1 + |\Delta Q_{t \to 0}|)^2 - 1 \big) \\
\cdot \left| 1 + \text{sign}(\Delta Q_{t \to 0}) \cdot \Delta Q_{t \to t-1} \right|
\end{split}
\end{equation}
\vspace{-0.5mm}
(iii) Lastly, a reward function with an additional penalty for performance drops introduces a penalty term \( \lambda \), enforcing stricter penalties for performance decreases to ensure that the agent prioritizes stability and avoids configurations that lead to significant performance degradation. 
\begin{equation}
r = - \lambda \cdot \max\left(0, -\text{sign}(\Delta Q_{t \to 0}) \cdot \Delta Q_{t \to t-1}\right)
\end{equation}
These methods employ different mathematical weighting mechanisms. They either emphasize amplifying performance improvements or impose stricter penalties for performance declines, promoting the agent to select configurations that stabilize and significantly enhance performance during tuning.

\vspace{-4mm}
\section{Experiments}
\label{sec:experiments}

In this section, we conduct extensive experiments to evaluate the performance of our proposed distributed multi-modal data similarity search framework, OneDB. 
{OneDB supports exact similarity search across all modalities. Once the modality weight vector $W$ is fixed, the similarity distance $\delta_W(\cdot,\cdot)$ is well-defined, and the returned result set is fully deterministic and complete under this metric. Thus, all reported results in the experiments are based on fully deterministic query evaluation.} The effectiveness of the Dual-Layer Indexing Strategy is demonstrated through experiments on index construction efficiency, similarity search performance, and scalability. Furthermore, the reliability of two independent modules—the weight learning component and the RL based tuning component—is confirmed through the ablation studies.

\vspace{-2mm}
\subsection{Experimental Settings}
\textbf{Datasets.} In our experiments, we use three real-world datasets: (i) \textit{Air}, that contains hourly measurements of key air pollutants collected from various monitoring stations across multiple cities in India, such as $\mathrm{PM2.5}$, $\mathrm{NO}$, $\mathrm{NO_2}$, $\mathrm{CO}$, $\mathrm{SO_2}$, $\mathrm{O_3}$, Benzene, Toluene, and Xylene ; (ii) \textit{Food}, which provides detailed information on food products, including the number of additives, nutritional facts (such as salt, energy, fat, proteins, and sugars), the main category, a set of descriptive labels, and the appearance of the products; and (iii) \textit{Rental}, that contains information on apartments in New York, including price, the number of bedrooms and bathrooms, location, publish date, and a brief review. Additionally, we generated (iv) \textit{Synthetic}, which includes randomly assigned geographic coordinates, textual data sourced from the Moby\footnote{http://icon.shef.ac.uk/Moby/}, figures gathered from Flickr\footnote{http://cophir.isti.cnr.it/}, along with multiple features consisting of randomly generated values. {Notably, we generate \textit{Synthetic II}, extended from \textit{Synthetic} dataset, includes million-scale features from the GoogLeNet model. \textit{Synthesis II} dataset cannot be evaluated on a single machine. It is generated to evaluate the impact of varying the number of workers on query performance.}

In our experiments, we apply the following metric distance functions: (i) $L_1$-norm distance is utilized for key attributes such as air pollutants, date, price, number of bedrooms and bathrooms, number of additives, nutritional facts, and vector representations for image data, including food product images and Flickr photos; (ii) $L_2$-norm distance is used for geographic coordinates; (iii) edit distance is applied to the main category of food products and words from the Moby dataset. To ensure comparability across metrics, we follow the prior study~\cite{zhu2022desire} and normalize each distance by dividing them by twice the median of all observed distances. Table~\ref{tab:datasets} outlines the dataset statistics, with cardinality denoted as ``Card." and the number of metric spaces represented by ``m". 

\noindent\textbf{Parameters and Performance Metrics.} We evaluate the performance of the multi-metric similarity search framework by varying parameters, including the \( k \) in multi-metric \( k \)-nearest neighbor (MM\textit{k}NN) queries, the search radius \( r \) in multi-metric range queries (MMRQ), the number of worker nodes and the cardinality (the proportion of the dataset used in the search relative to the entire data). Here, we control the proportion of result objects returned by MMRQ by adjusting the search radius \( r \), ensuring that \( r \) represents a consistent search scope across different datasets. The key performance metrics include index construction time, query execution time, and throughput. For OneDB, the index construction time consists of two components: the time to build the global index (which manages distributed partitioning) and the local index (which facilitates efficient filtering at worker nodes). Additionally, to mitigate the impact of network fluctuations in the distributed environment, we measure based on the average value over 100 queries.

\noindent\textbf{Baselines.} To validate the performance of the multi-metric similarity search, we compare the OneDB framework with three state-of-the-art systems: (1) the single-machine multi-metric similarity search framework DESIRE, (2) the distributed single-metric similarity search framework {\sf DIMS}, and (3) vector database system {\sf Milvus}. To ensure fairness, {\sf DESIRE} and {\sf DIMS} are integrated into the same Spark platform as OneDB, resulting in {\sf DIMS-M} and {\sf DESIR-D}. To further verify the effectiveness of OneDB's index selection, we also include two baselines derived from OneDB: (1) OneDB-R2M, which replaces the R-tree with an M-tree, and (2) OneDB-MVP2M, which replaces the MVP-tree with an M-tree. Experiments run on a 11 nodes cluster, with each machine configured with Ubuntu 14.04.3, 128GB of RAM, and equipped with two 12-core E5-2620 v3 processors running at 2.40 GHz. The system runs on Hadoop 2.6.5 and Spark 2.2.0.

 \vspace {-2mm}
\subsection{Construction Cost and Update Performance}



\begin{table}
 \vspace{-1cm}
\caption{Statistics of the datasets used}
\label{tab:datasets}
\small 
\vspace{-0.1cm}
\setlength{\tabcolsep}{4.2pt} 
\begin{tabular}{|p{2.0cm}<{\centering} |p{1.35cm}<{\centering} |p{1.1cm}<{\centering} |p{1.25cm}<{\centering} |p{1cm}<{\centering}|p{1.1cm}<{\centering}|p{1.35cm}<{\centering}|} \hline 
 \multicolumn{2}{|c|}{\textbf{Datasets}} & \textbf{Rental} & \textbf{Air} & \textbf{Food} & \textbf{Syn.} & \textbf{Syn. II} \\ \hline 
 \multicolumn{2}{|c|}{Card.}  & 113,176 & 1,150,000 & 38,757 & 200,000 & 10,000,000 \\ \hline 
 \multicolumn{2}{|c|}{m}  & 5 & 13 & 9 & 50 & 96 \\ \hline 
\end{tabular}
\vspace{-0.4cm}
\end{table}

\begin{table}[tb]
\caption{Construction Costs and Storage Sizes}
\vspace{-0.45cm}
\small{\textit{*Note: Syn. denotes Synthesis.}}
\label{tab:construction_cost}
\centering
\footnotesize 
\setlength{\tabcolsep}{5pt} 
\begin{tabular}{|c|c|c|c|c|c|c|}
\hline
\multicolumn{2}{|c|}{\textbf{Datasets}} & \textbf{Food} & \textbf{Air} & \textbf{Rental} & \textbf{Syn.} & \textbf{Syn. II} \\ \hline
\multirow{2}{*}{DIMS-M} & (s) & 29.33 & 1209.36 & 52.97 & 682.55 & 2359 \\ \cline{2-7} 
 & (MB) & 180.14 & 1608.24 & 272.36 & 1752.33 & 52309 \\ \hline
\multirow{2}{*}{DESIRE-D} & (s) & \textbf{20.43} & 893.52 & \textbf{32.21} & \textbf{551.52} & 2135 \\ \cline{2-7} 
 & (MB) & 147.73 & 1327.99 & 243.86 & \textbf{1602.14} & \textbf{41250} \\ \hline
\multirow{2}{*}{OneDB} & (s) & 22.77 & \textbf{743.84} & 36.65 & 611.47 & \textbf{2083} \\ \cline{2-7} 
 & (MB) & \textbf{157.2} & \textbf{1397} & \textbf{238} & 1681.92 & 42521 \\ \hline
\end{tabular}
\vspace{-6mm} 
\end{table}

\begin{table}[t]
\small
\vspace{-1cm}
\caption{{Effect of Different Update Ratios}}
\label{tab:update_cost}
\vspace{-0.1cm}
\setlength{\tabcolsep}{0.5mm}{
\begin{tabular}{|p{0.9cm}<{\centering} |p{3.2cm}<{\centering} |p{1cm}<{\centering} |p{1cm}<{\centering} |p{1cm}<{\centering}|p{1.05cm}<{\centering}|} \hline 
 \multicolumn{2}{|c|}{\textbf{Update Ratio}} & \textbf{Food} & \textbf{Air} &\textbf{Rental} &\textbf{Syn.}\\ \hline  
\multirow{2}{*}{\makecell{0.1\%}} & Avg. Update Cost (ms)  & 8.93 & 39.06 & 14.19 & 35.51\\ \cline{2-6}
 & Query Time $\Delta$ (ms) & +0.53 & +4.73 & +0.77 & +4.89 \\ \hline 
 \multirow{2}{*}{\makecell{1.0\%}} & Avg. Update Cost (ms)  & 9.52 & 42.22 & 16.37 & 41.93\\ \cline{2-6}
 & Query Time $\Delta$ (ms) & +0.92 & +6.32 & +0.97 & +6.88 \\ \hline 
\end{tabular}
}
\vspace{-4mm}
\end{table}

To compare the construction costs of OneDB with those of its two competitors, {\sf DIMS-M} and {\sf DESIRE-D}, we utilized construction time and storage capacity as performance evaluation metrics. The results presented in Table~\ref{tab:construction_cost} indicate that {\sf DIMS-M} incurs higher computational costs as it constructs indexes for each modality in global and local indexes. At the same time, OneDB exhibits comparable construction efficiency but less storage cost against {\sf DESIRE-D}. 
This is because OneDB constructs separated indexes for different modalities at the worker nodes, avoiding unnecessary redundant storage, while {\sf DESIRE-D} needs to store additional pre-computed distances. However, considering that OneDB adopts a coarse-grained indexing strategy at the primary node and selects modality-specific indexes at the worker nodes, OneDB efficiently manages large-scale, heterogeneous datasets while maintaining a balanced trade-off between storage and computational costs. These results highlight that OneDB demonstrates superior practicality in scenarios requiring scalable and efficient multi-modal data processing.

To evaluate the update performance of OneDB, we perform 0.1\%--1\% random updates (object deletions and insertions). As shown in Table~\ref{tab:update_cost}, OneDB achieves low update costs, and stable query performance with only minor increases in latency. 


\vspace{-0.2cm}
\subsection{Similarity Search Performance}
\label{subsec:expsimilarity}
\begin{figure}[t]
\vspace{0.1cm}
\begin{center}
\subfigtopskip=-7pt
\subfigcapskip=-3pt
\includegraphics[height=0.3cm]{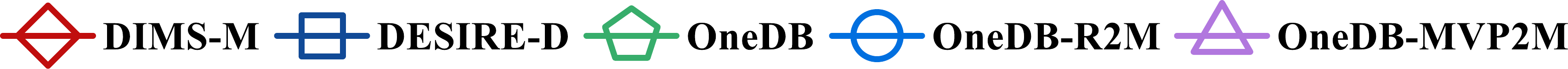}\vspace{0.05cm}
\subfigure[\textit{Air}]{
  \label{fig:rnn-02}
  \includegraphics[width=0.23\textwidth,height=0.14\textwidth]{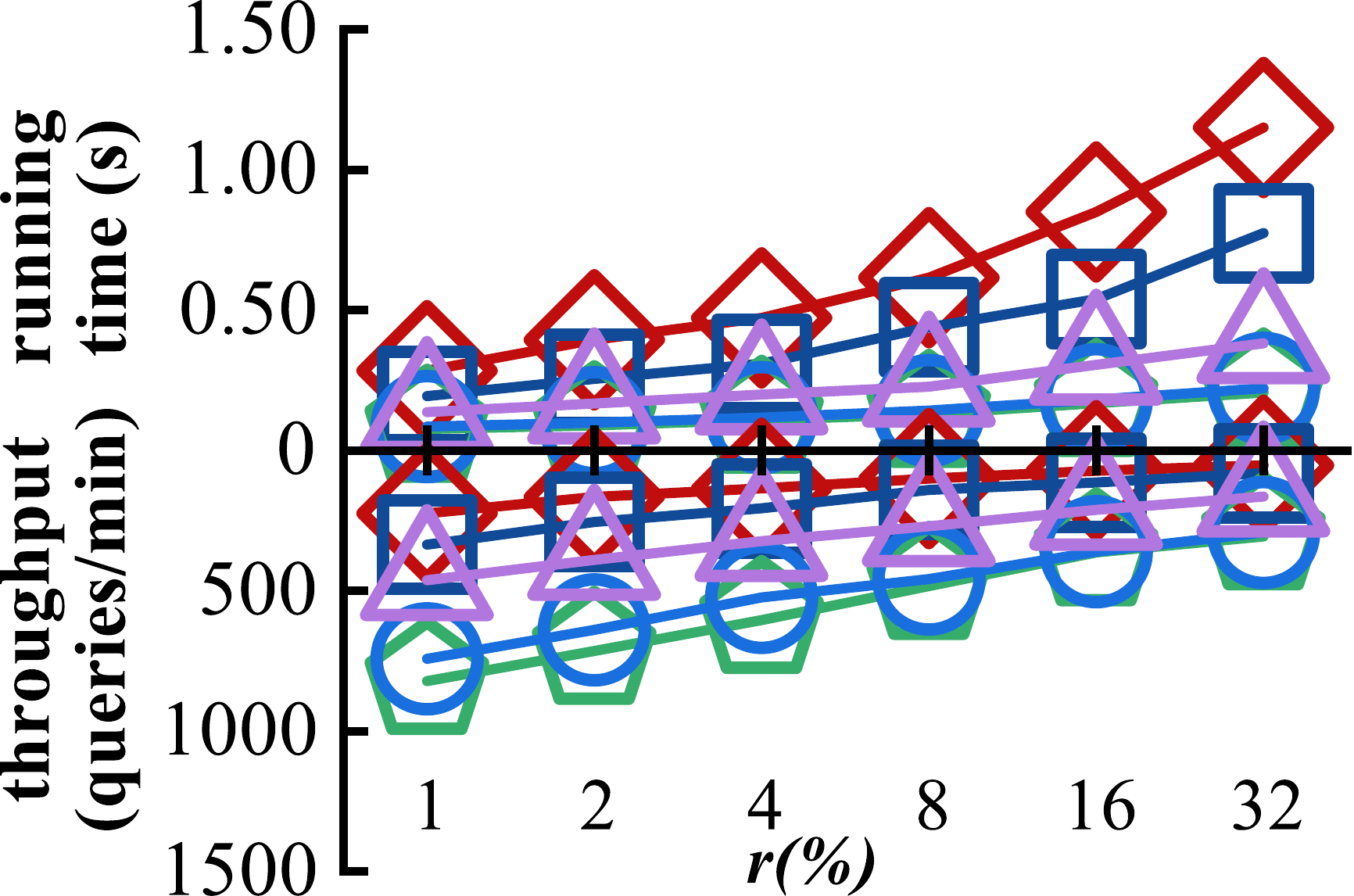}
}
\subfigure[\textit{Food}]{
  \label{fig:rnn-02}
  \includegraphics[width=0.23\textwidth,height=0.14\textwidth]{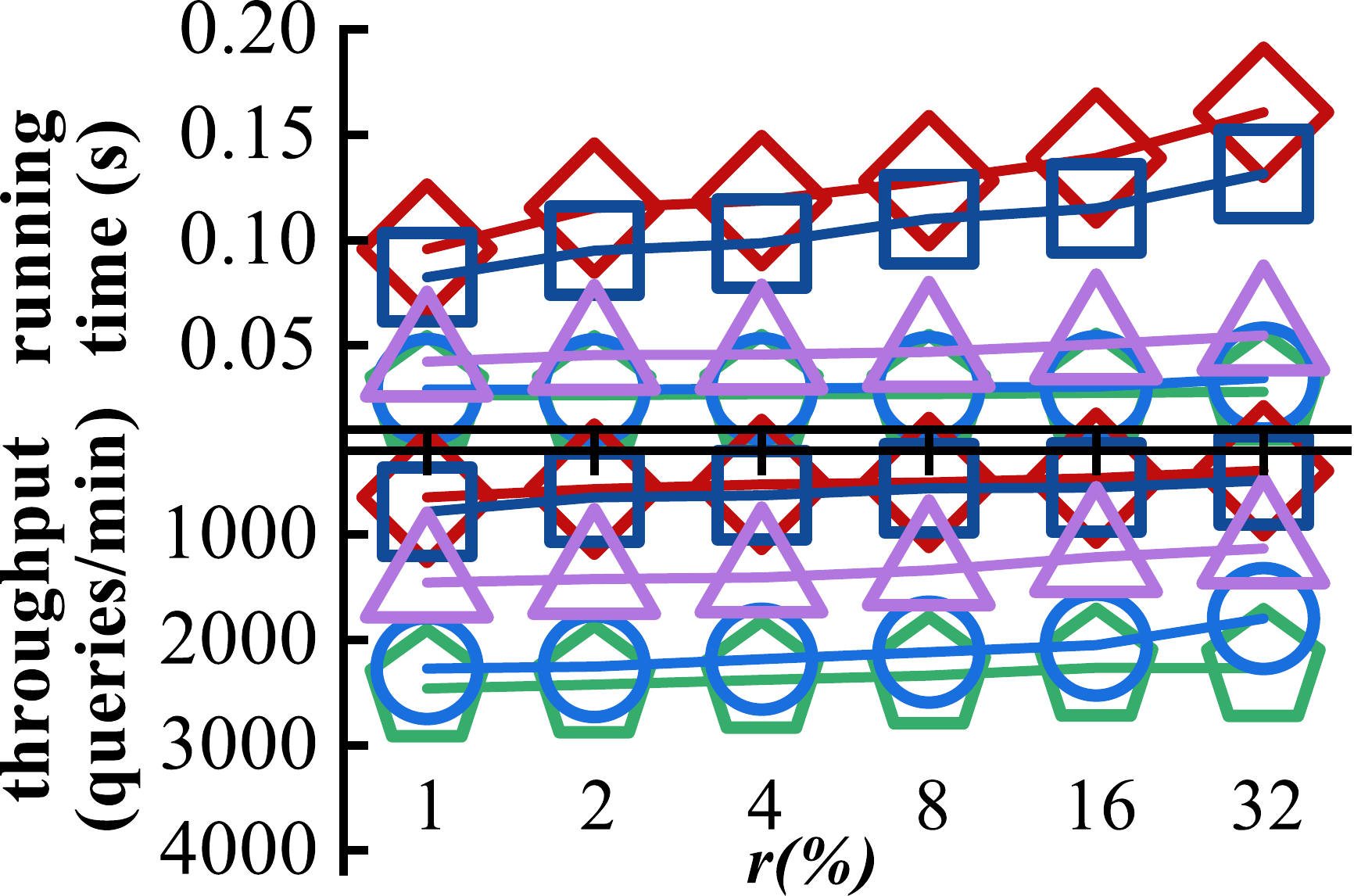}
}

\subfigure[\textit{Rental}]{
  \label{fig:rnn-02}
  \includegraphics[width=0.23\textwidth,height=0.14\textwidth]{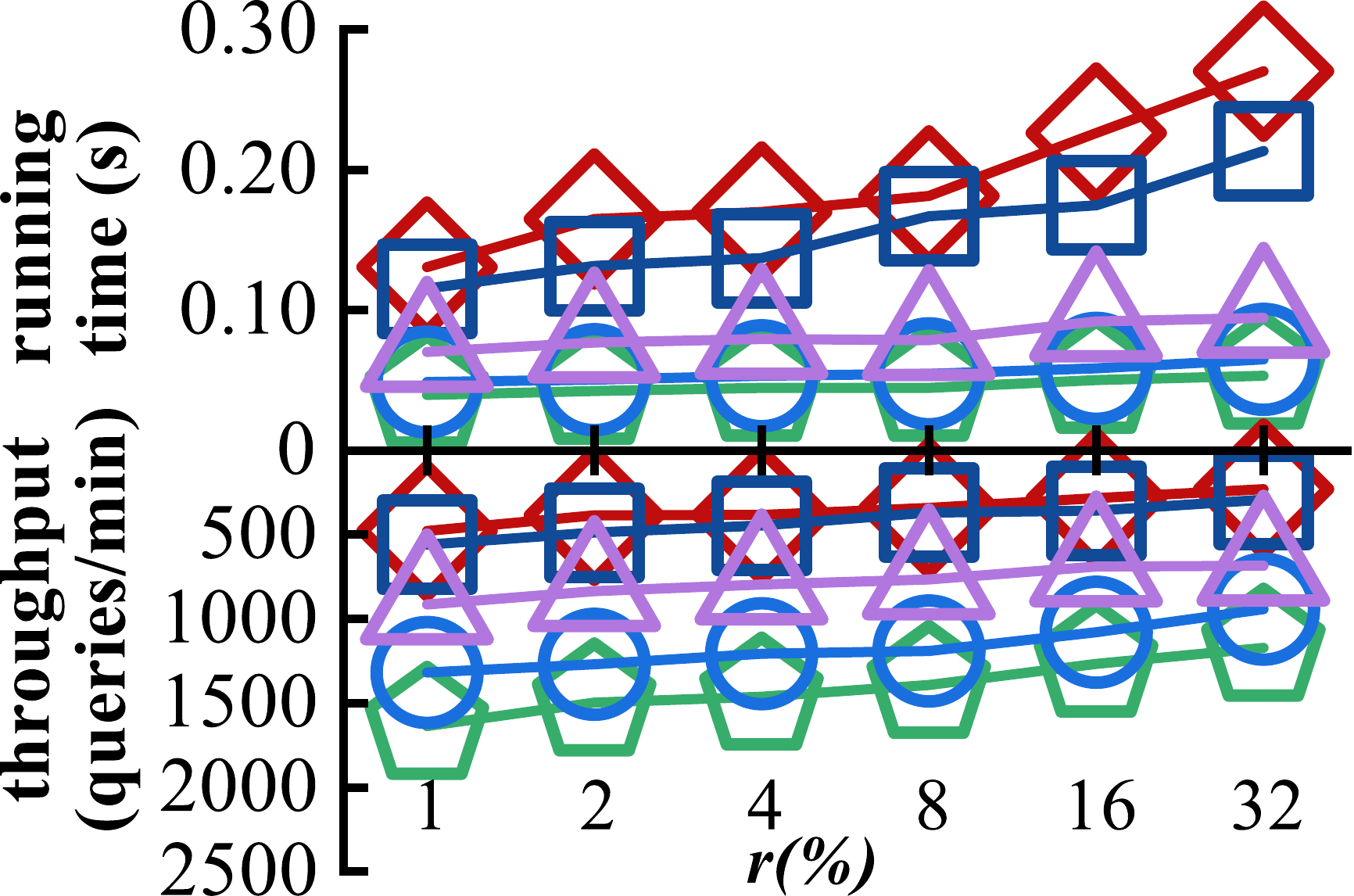}
}
\subfigure[\textit{Synthesis}]{
  \label{fig:rnn-02}
  \includegraphics[width=0.23\textwidth,height=0.14\textwidth]{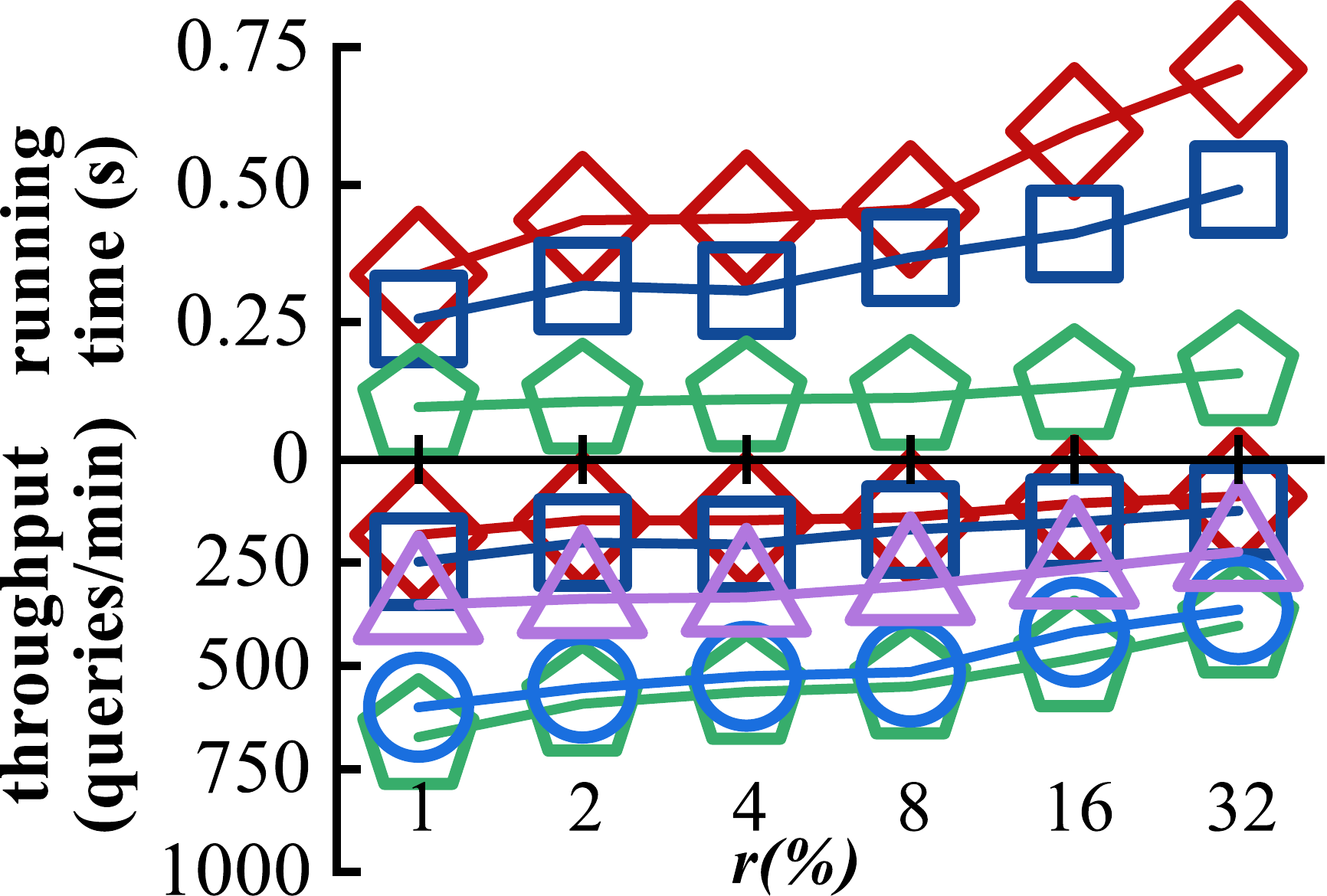}
}
\vspace{-0.25cm}
\caption{{MMRQ performance vs. $r$}}
\vspace{-0.6cm}
\label{fig:rnn}
\end{center}
\end{figure}

\begin{figure}[t]
\vspace{-1cm}
\begin{center}
\subfigtopskip=-7pt
\subfigcapskip=-3pt
\includegraphics[height=0.3cm]{ExpFigs/icon.pdf}\vspace{0.05cm}
\subfigure[\textit{Air}]{
  \label{fig:rnn-02}
  \includegraphics[width=0.23\textwidth,height=0.14\textwidth]{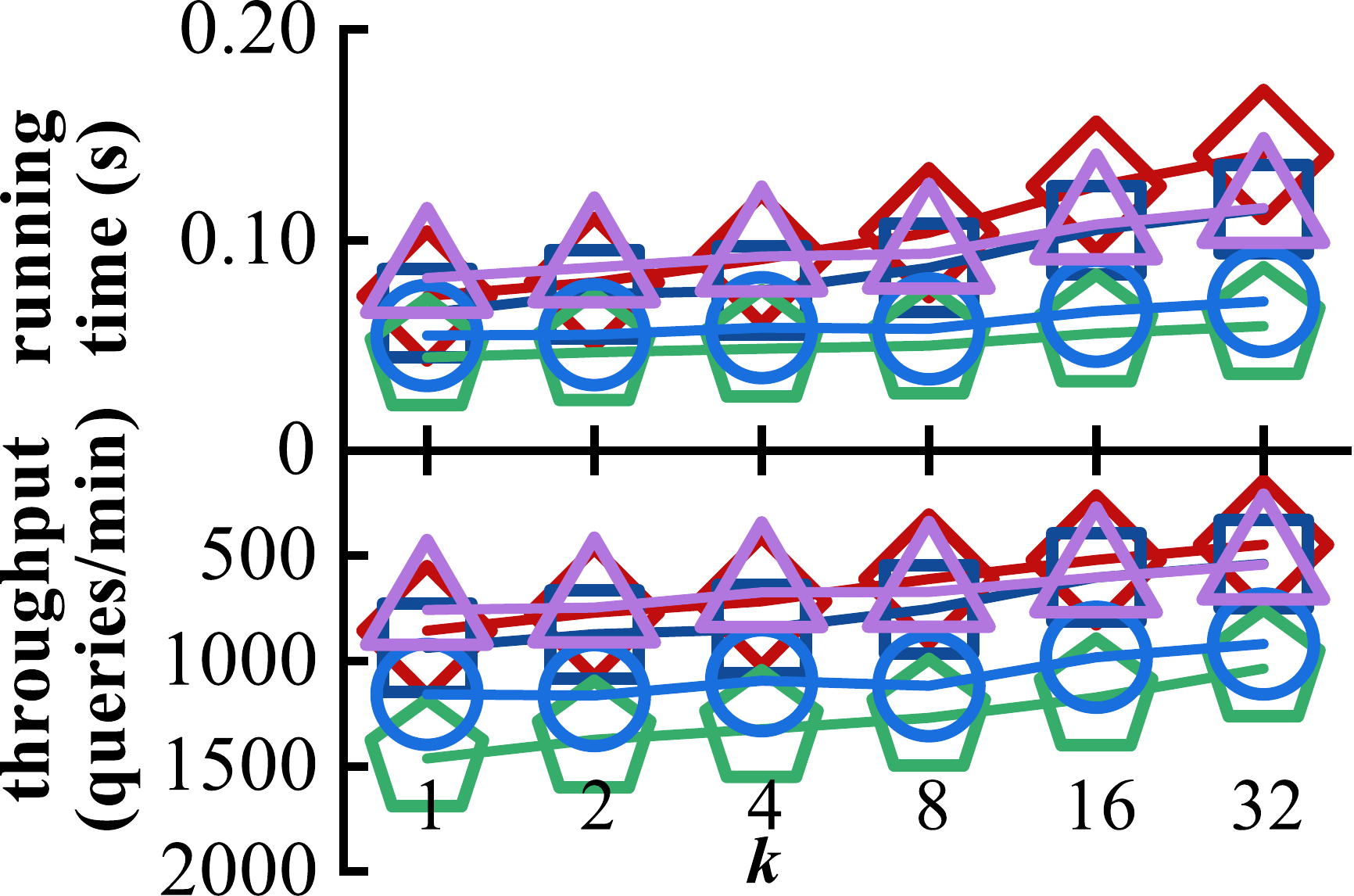}
}
\subfigure[\textit{Food}]{
  \label{fig:rnn-02}
  \includegraphics[width=0.23\textwidth,height=0.14\textwidth]{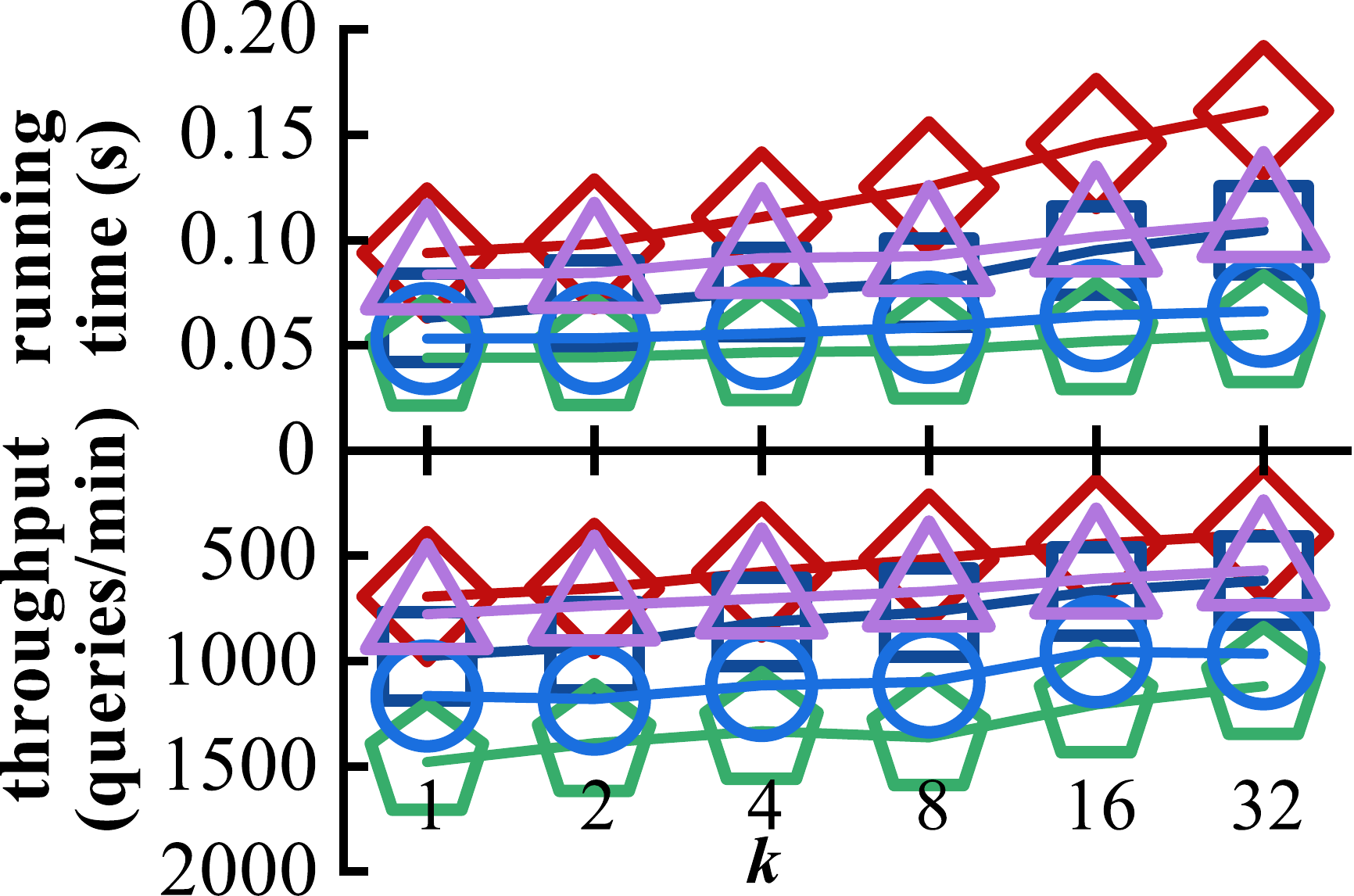}
}

\subfigure[\textit{Rental}]{
  \label{fig:rnn-02}
  \includegraphics[width=0.23\textwidth,height=0.14\textwidth]{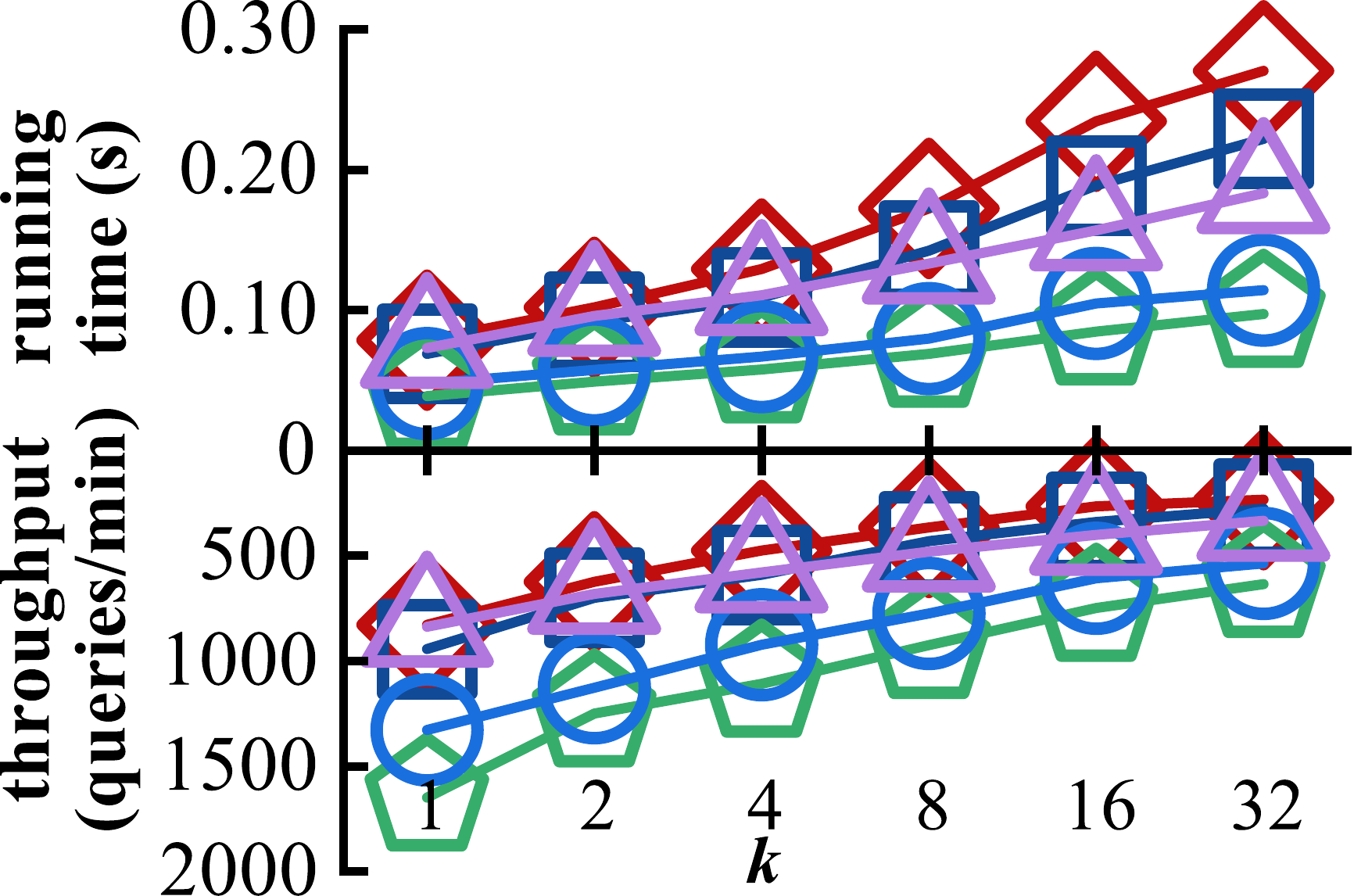}
}
\subfigure[\textit{Synthesis}]{
  \label{fig:rnn-02}
  \includegraphics[width=0.23\textwidth,height=0.14\textwidth]{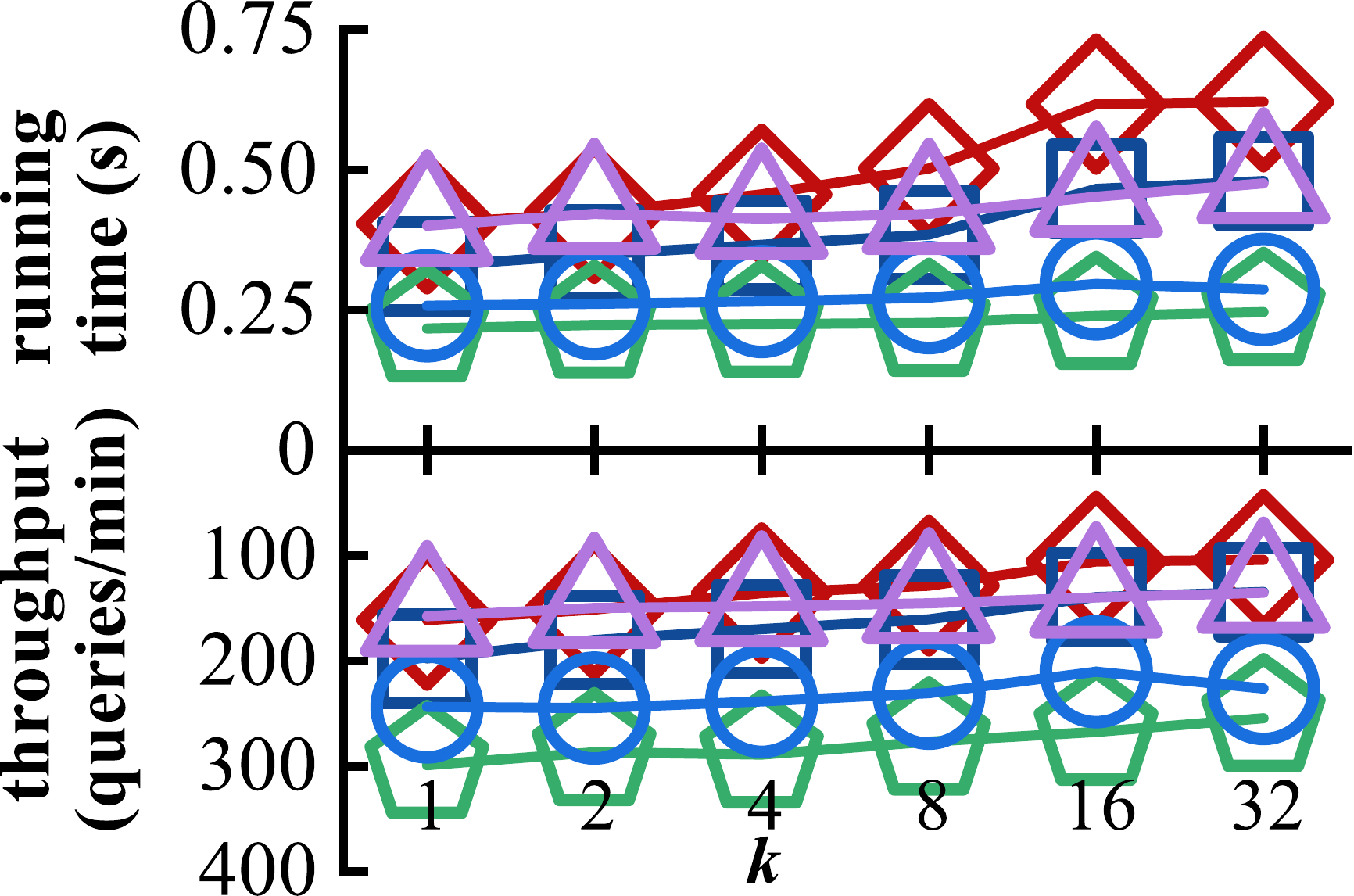}
}
\vspace{-0.35cm}
\caption{{MM$k$NNQ performance vs. $k$}}
\vspace{-0.3cm}
\label{fig:knnn}
\end{center}
\end{figure}

\textbf{Comparison with Metric Space Solutions.} Figs.~\ref{fig:rnn} and~\ref{fig:knnn} depict the MMRQ and MMkNNQ performance by varying $r$ and $k$ respectively. These results demonstrate that OneDB exhibits significant advantages in running time and throughput, particularly on larger datasets with higher values of $r$ and $k$, achieving 2.5 to 5.75 times speedup compared to other baseline methods. This can be attributed to two key factors. First, OneDB uses RR*-tree at the primary node to partition data, achieving balanced workload distribution and efficient resource utilization.
Additionally, by selecting different indexes based on modality type at the worker nodes, OneDB accelerates the similarity search process. To validate the choice of index, we evaluate against OneDB-R2M and OneDB-MVP2M. Experimental results demonstrate that these variants incur higher query latency (up to 25\% increase for OneDB-MVP2M), confirming the superiority of the current design in achieving optimal query performance. This observation is consistent with findings in previous survey~\cite{chen2022indexing}, as MVP-tree is a good chocie for high-dimensional data space in main-memory. Second, when handling MM\textit{k}NNQ, OneDB quickly locates the nearest neighbors of the query objects using global and local indexes and leverages multiple local workers to answer the queries simultaneously, reducing the computational overhead. Moreover, OneDB's indexing mechanism prunes unnecessary regions, narrowing the search space and significantly improving overall performance. Across various $r$ and $k$ values, OneDB consistently demonstrates excellent performance across all datasets, validating the effectiveness and efficiency of its partitioning strategy and indexing structure.

\begin{figure}[t]
\vspace{-0.3cm}
\begin{center}
\subfigtopskip=-7pt
\subfigcapskip=-2pt
\includegraphics[height=0.3cm]{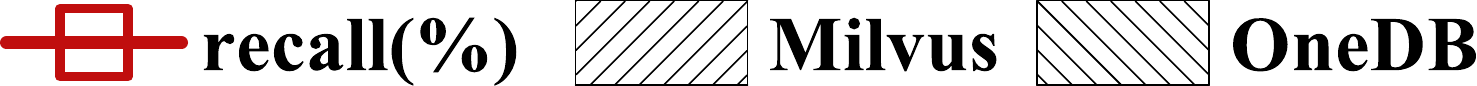}\vspace{0.15cm}
\subfigure[\textit{Air}]{
  \label{fig:rnn-02}
  \includegraphics[width=0.23\textwidth,height=0.14\textwidth]{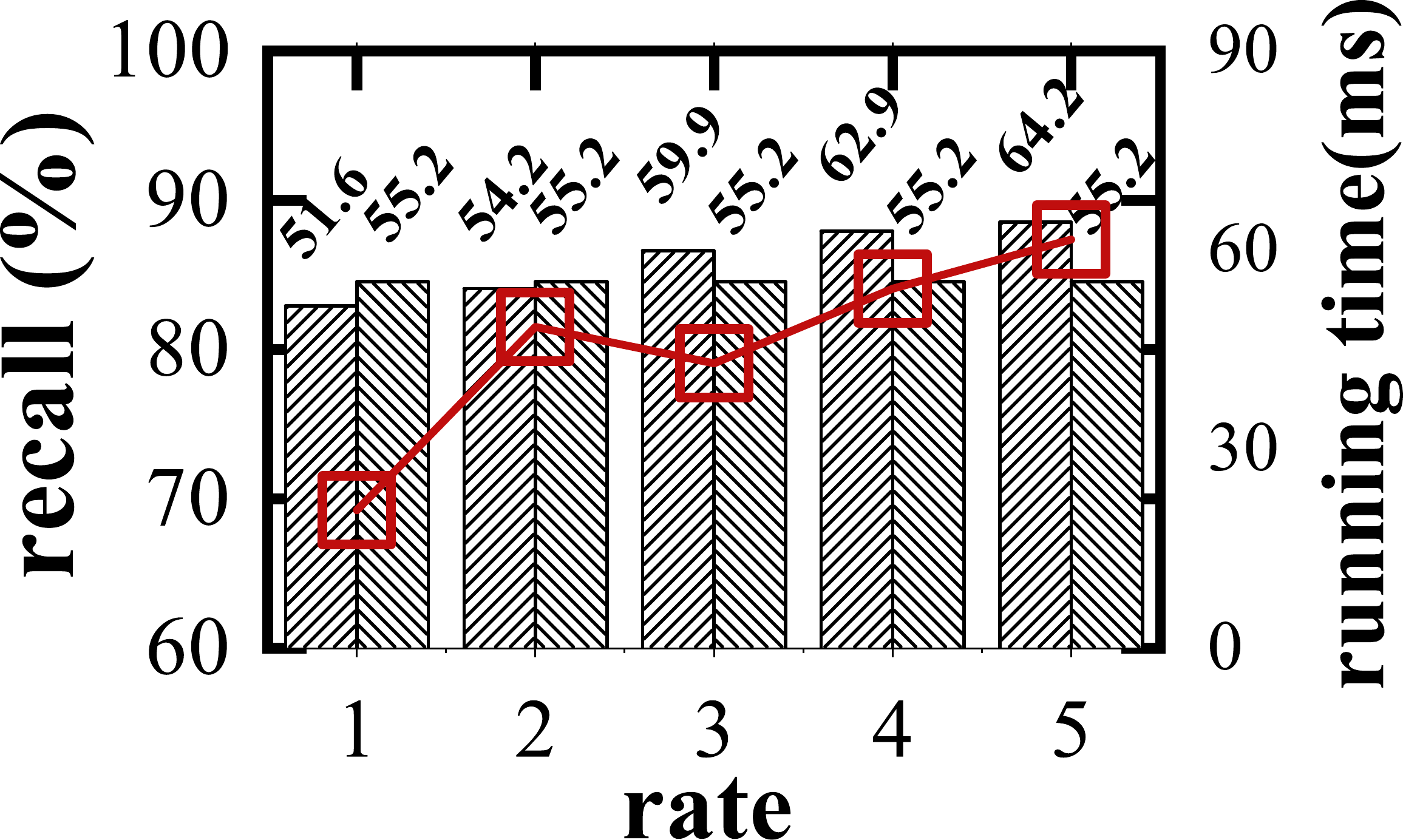}
}
\subfigure[\textit{Food}]{
  \label{fig:rnn-02}
  \includegraphics[width=0.23\textwidth,height=0.14\textwidth]{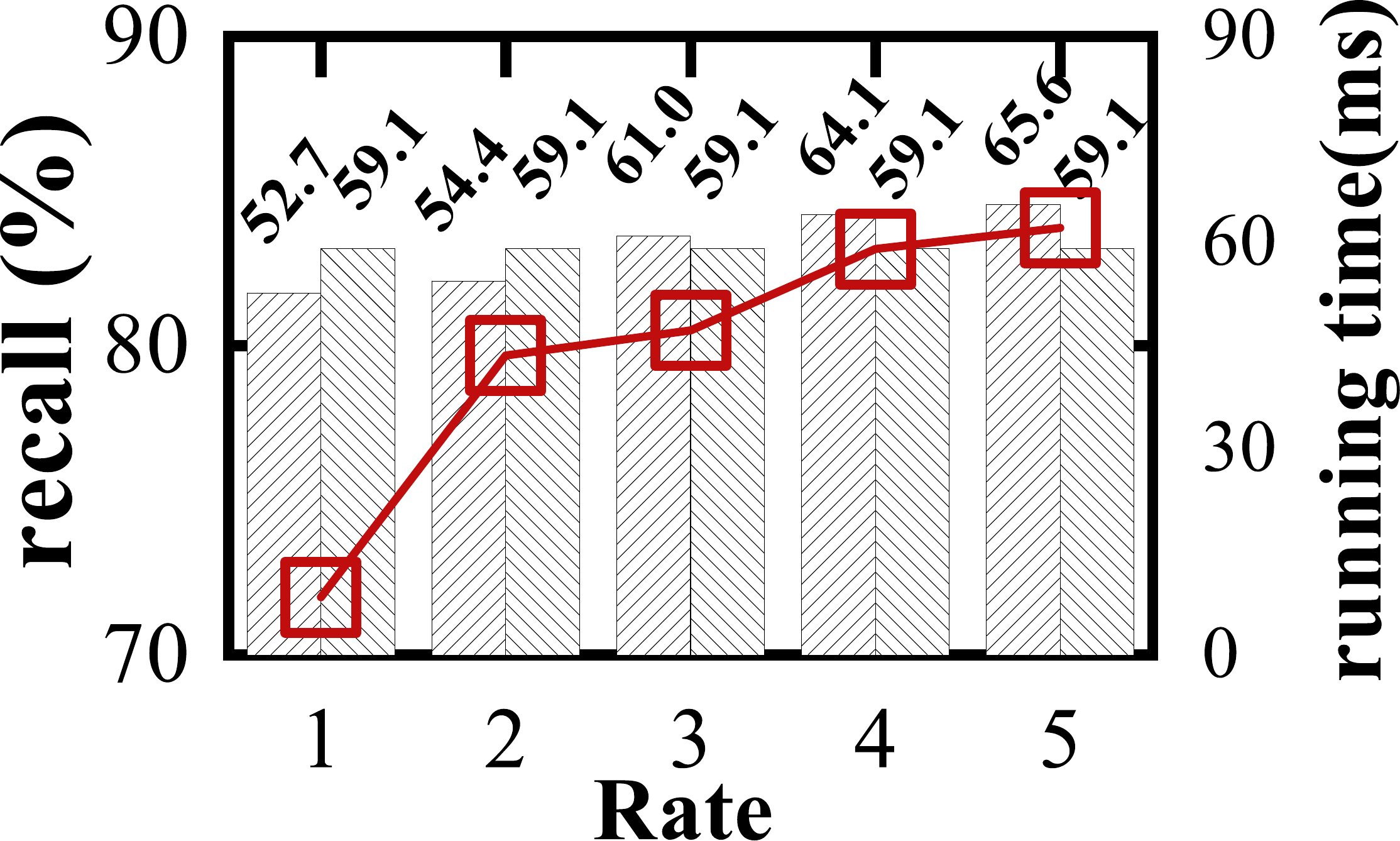}
}

\subfigure[\textit{Rental}]{
  \label{fig:rnn-02}
  \includegraphics[width=0.23\textwidth,height=0.14\textwidth]{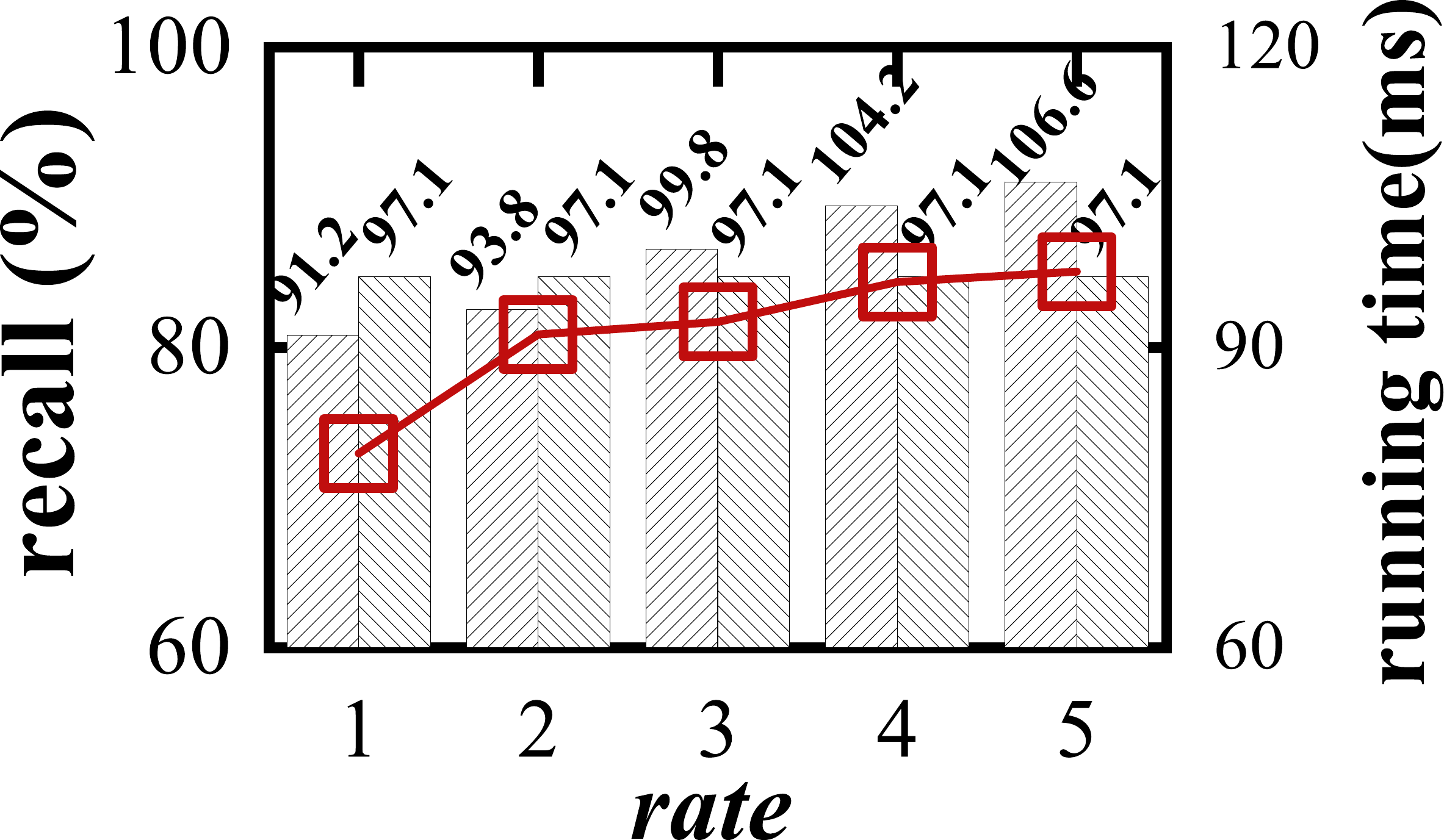}
}
\subfigure[\textit{Synthesis}]{
  \label{fig:rnn-02}
  \includegraphics[width=0.23\textwidth,height=0.14\textwidth]{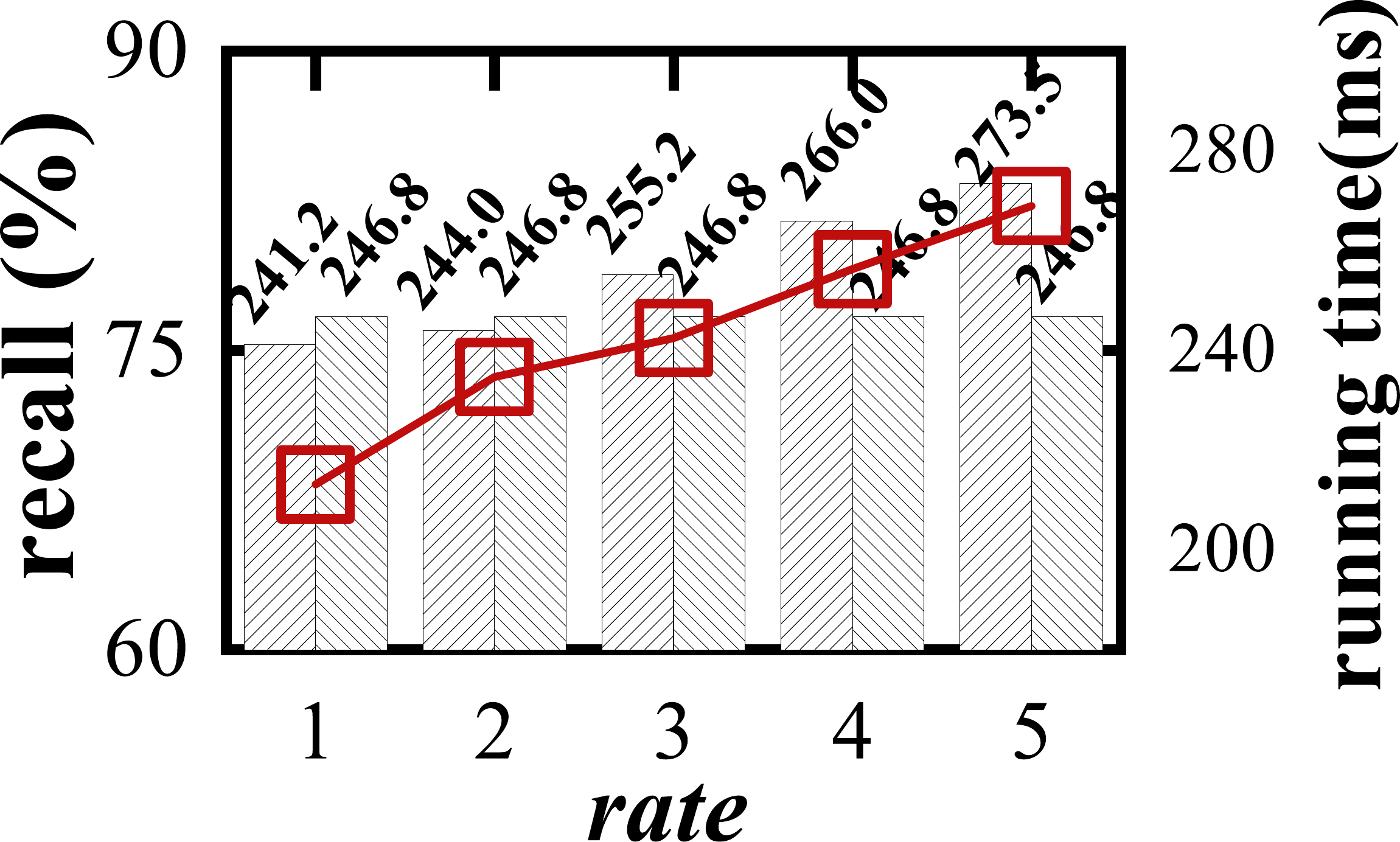}
}

\vspace{-0.3cm}
\caption{{Comparison with vector database {\sf Milvus}}}
\vspace{-8mm}
\label{fig:vectordatabase}
\end{center}
\end{figure}

\noindent\textbf{Comparison with Vector Database.} We choose the state-of-the-art vector database {\sf Milvus} for comparision, as it supports distributed multi-vector $k$NN search. It is important to note that {\sf Milvus} cannot support multi-vector range search,  and its multi-vector search involves executing search requests on multiple vector fields and merging the results using a re-ranking strategy. Therefore, both the $k$NN count for a single vector and multi-vectors need to be set, and this ratio directly affects the recall rate and efficiency of multi-vector $k$NN searches. Fig.~\ref{fig:vectordatabase} illustrates {\sf Milvus} 's search performance compared with OneDB as the query ratio varies between 1 and 5. It is observed that while {\sf Milvus} has a slightly lower runtime than OneDB at lower ratios, its recall rate is only 70\%. As recall approaches 85\% with higher query ratios, the search efficiency decreases by 10\% compared to OneDB. This indicates that in multimodal $k$NN search tasks, {\sf Milvus} 's multi-vector search struggles to balance efficiency with high recall rates.

\vspace{-0.2cm}

\subsection{Scalability Performance}

 \begin{figure}[t]
 \vspace{-1cm}
\begin{center}
\subfigtopskip=-7pt
\subfigcapskip=-3pt
\includegraphics[height=0.3cm]{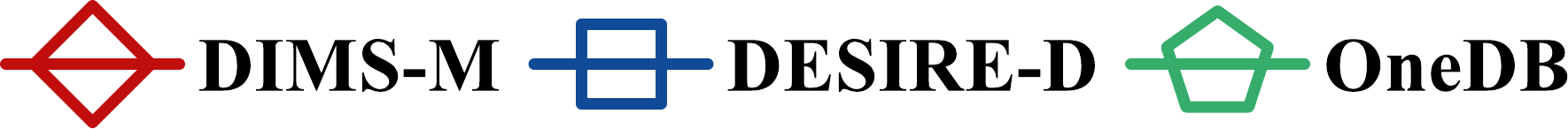}\vspace{0.1cm}
\subfigure[\textit{Food}]{
  \label{fig:rnn-02}
  \includegraphics[width=0.23\textwidth,height=0.14\textwidth]{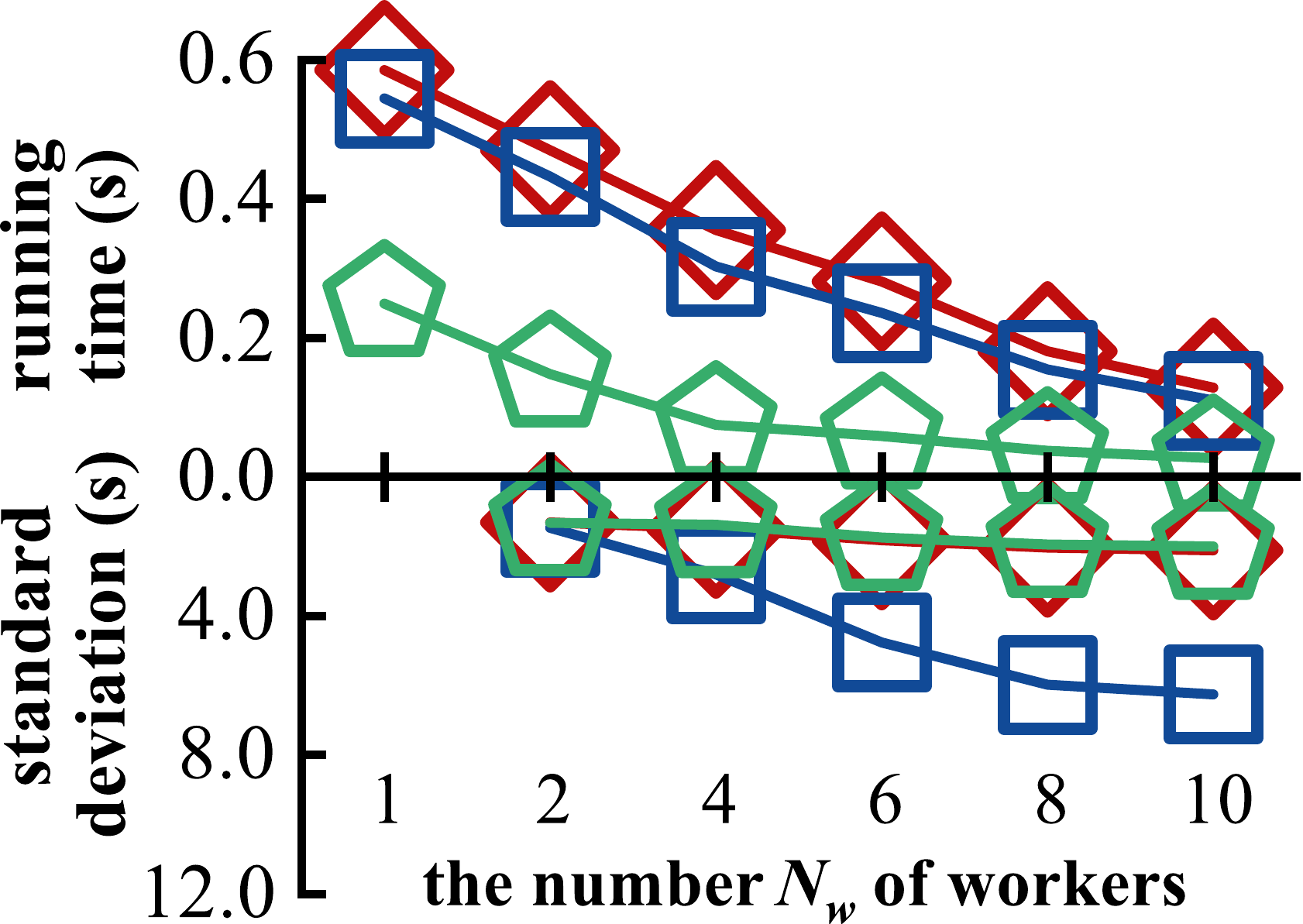}
}
\subfigure[\textit{Rental}]{
  \label{fig:rnn-02}
  \includegraphics[width=0.23\textwidth,height=0.14\textwidth]{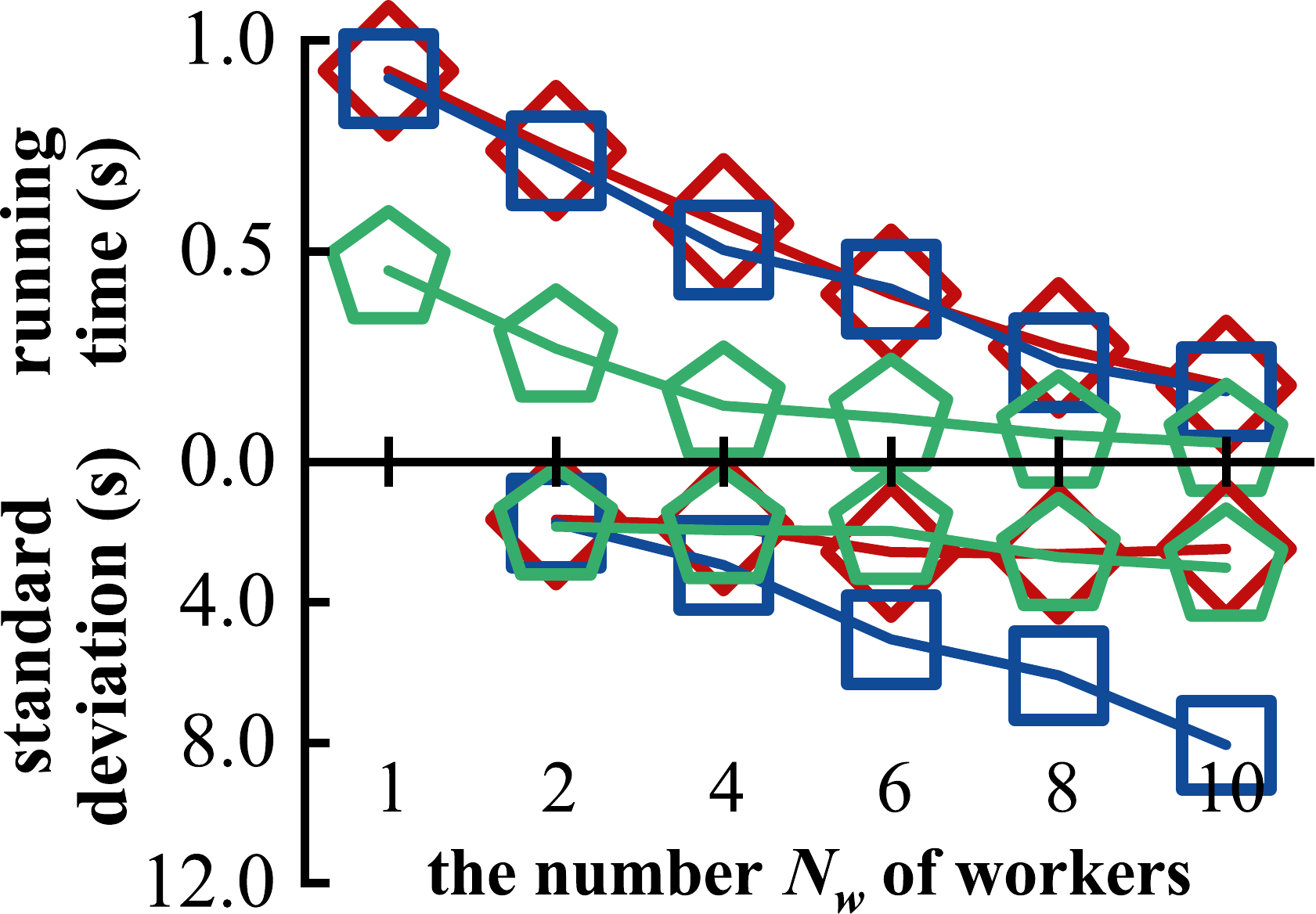}
}
\vspace{-0.2cm}
\subfigure[\textit{Air}]{
  \label{fig:rnn-02}
  \includegraphics[width=0.23\textwidth,height=0.14\textwidth]{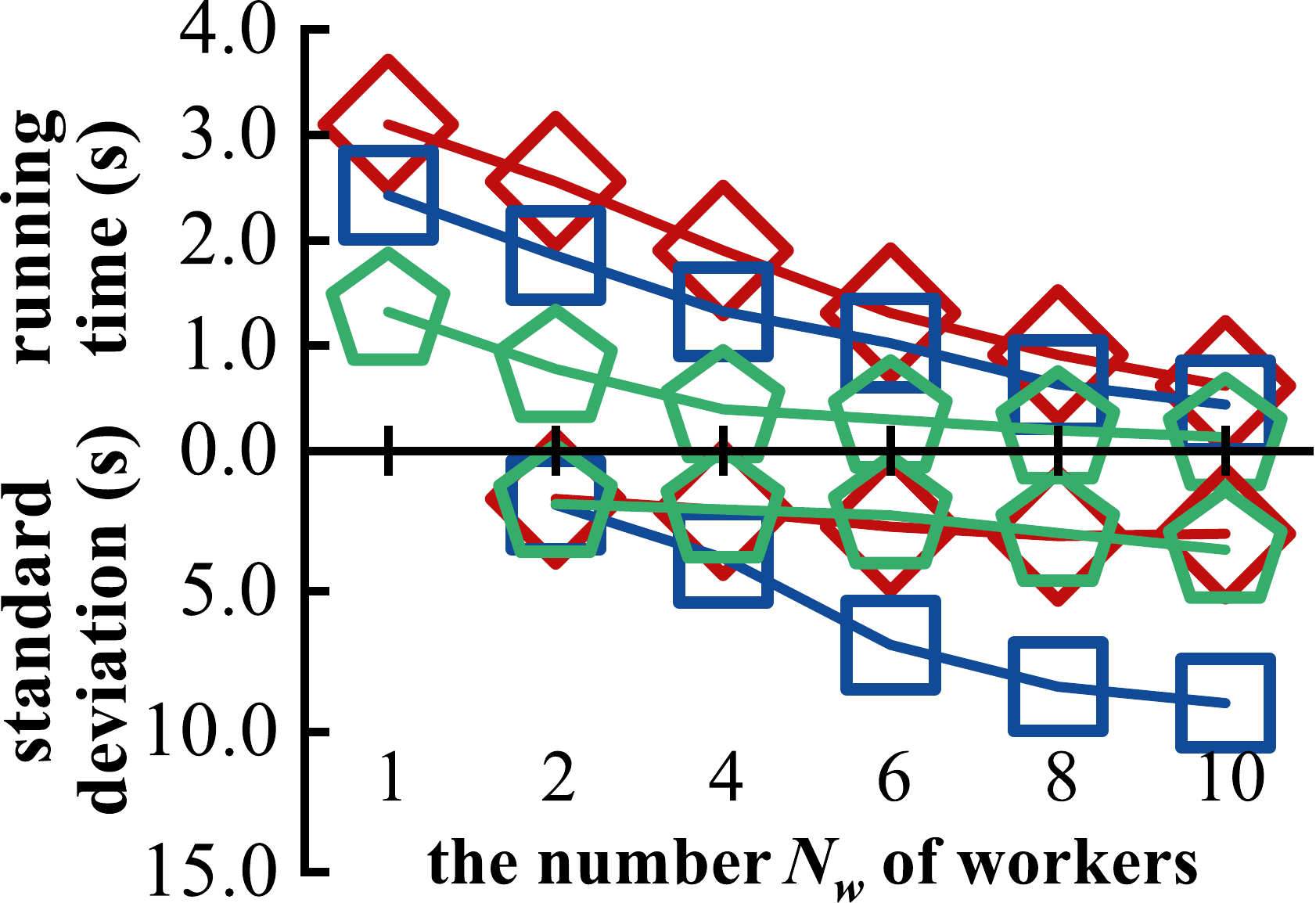}
}
\subfigure[{\textit{Synthesis II}}]{
  \label{fig:rnn-02}
  \includegraphics[width=0.23\textwidth,height=0.14\textwidth]{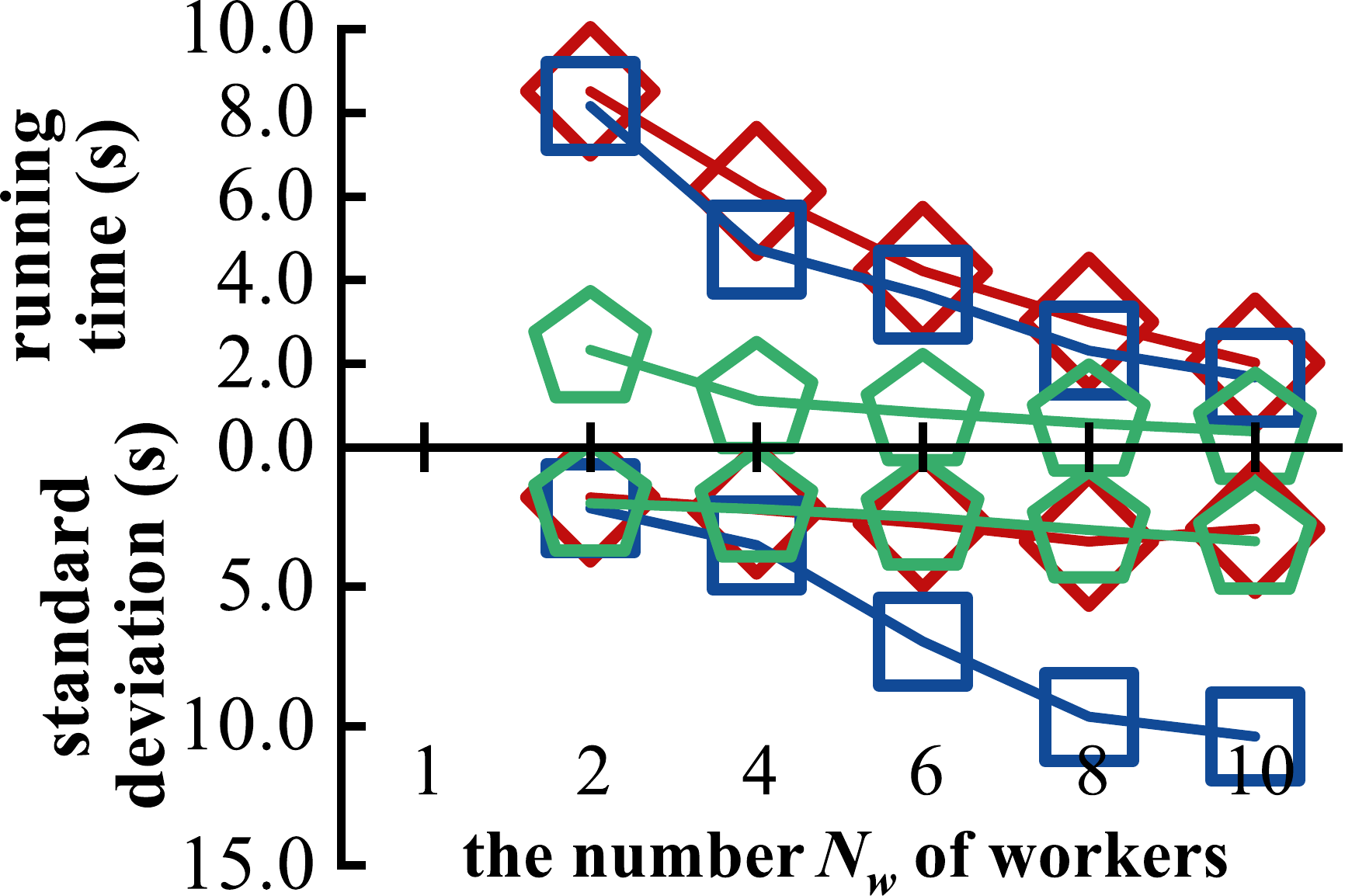}
}
\vspace{-0.3cm}
\caption{{Effect of the number $N_w$ of workers }}
\vspace{-0.8cm}
\label{fig:worker}
\end{center}
\end{figure}

\textbf{{Effects of worker number $N_w$.}} To evaluate the scalability performance of OneDB, we vary the number of workers using the MMRQ tasks. Figure~\ref{fig:worker} depicts the running time and the standard deviation of the worker's workload. It is observed that as the number of workers increases, the query time for all methods decreases, and OneDB consistently leads by a factor of 1.7x to 4.8x in performance. Simultaneously, the standard deviation of the workload increases because it is harder to distribute evenly across computing nodes when more nodes are included. OneDB and {\sf DIMS-M} significantly outperform {\sf DESIRE-D} in terms of standard deviation. Compared to {\sf DIMS-M}, OneDB has a slight advantage because a dual-layer indexing strategy is employed in the global index of OneDB. This strategy performs coarse-grained, even data partitioning at the global level. At the same time, optimized index structures are used locally for each modality, thereby achieving more efficient data access and query performance.

\begin{figure}[t]
\vspace{0.05cm}
\begin{center}
\subfigtopskip=-7pt
\subfigcapskip=-3pt
\includegraphics[height=0.3cm]{ExpFigs/icon_3.pdf}\vspace{0.1cm}
\subfigure[\textit{Air}]{
  \label{fig:rnn-02}
  \includegraphics[width=0.23\textwidth,height=0.14\textwidth]{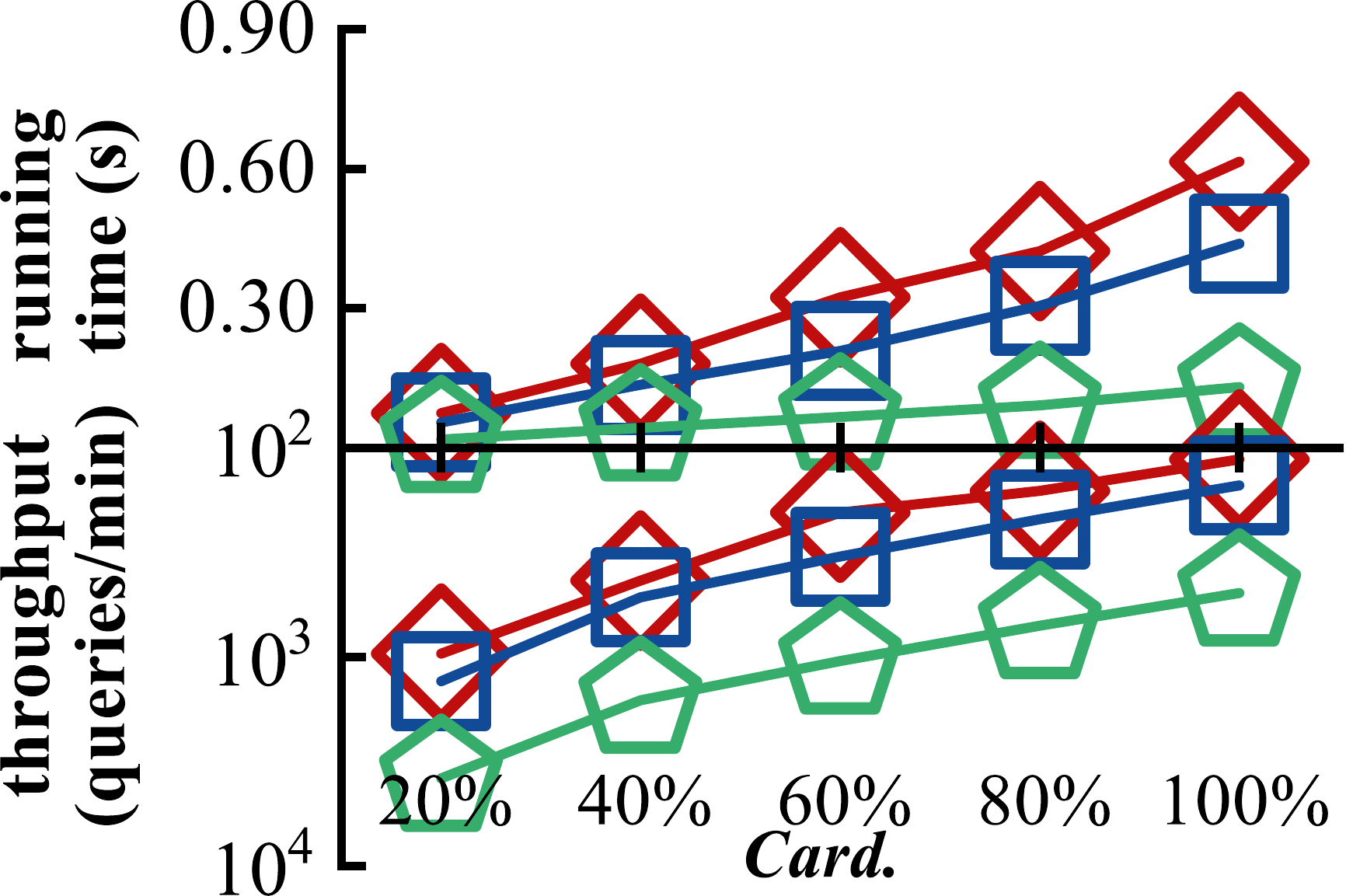}
}
\subfigure[\textit{Food}]{
  \label{fig:rnn-02}
  \includegraphics[width=0.23\textwidth,height=0.14\textwidth]{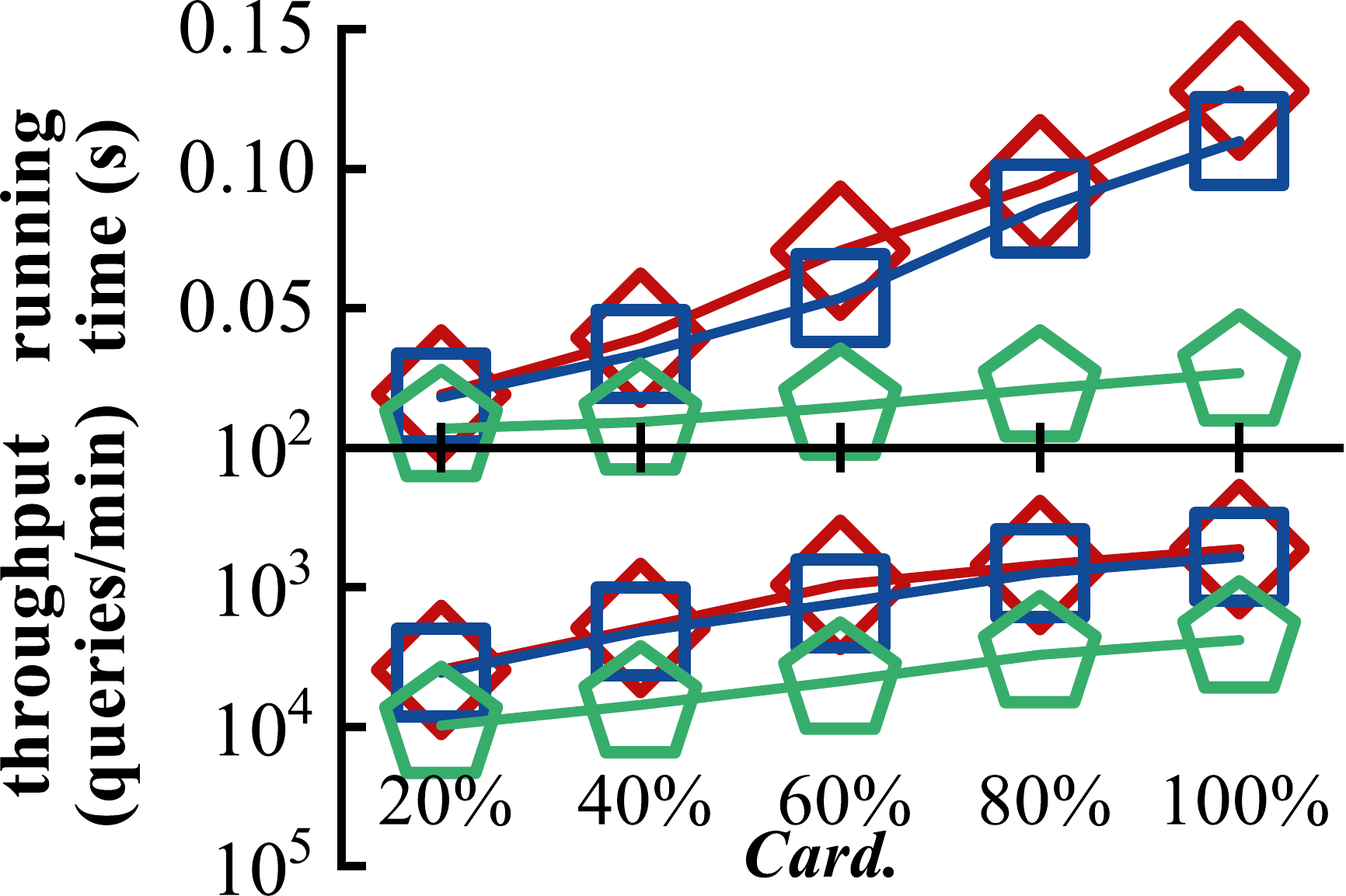}
}

\subfigure[\textit{Rental}]{
  \label{fig:rnn-02}
  \includegraphics[width=0.23\textwidth,height=0.14\textwidth]{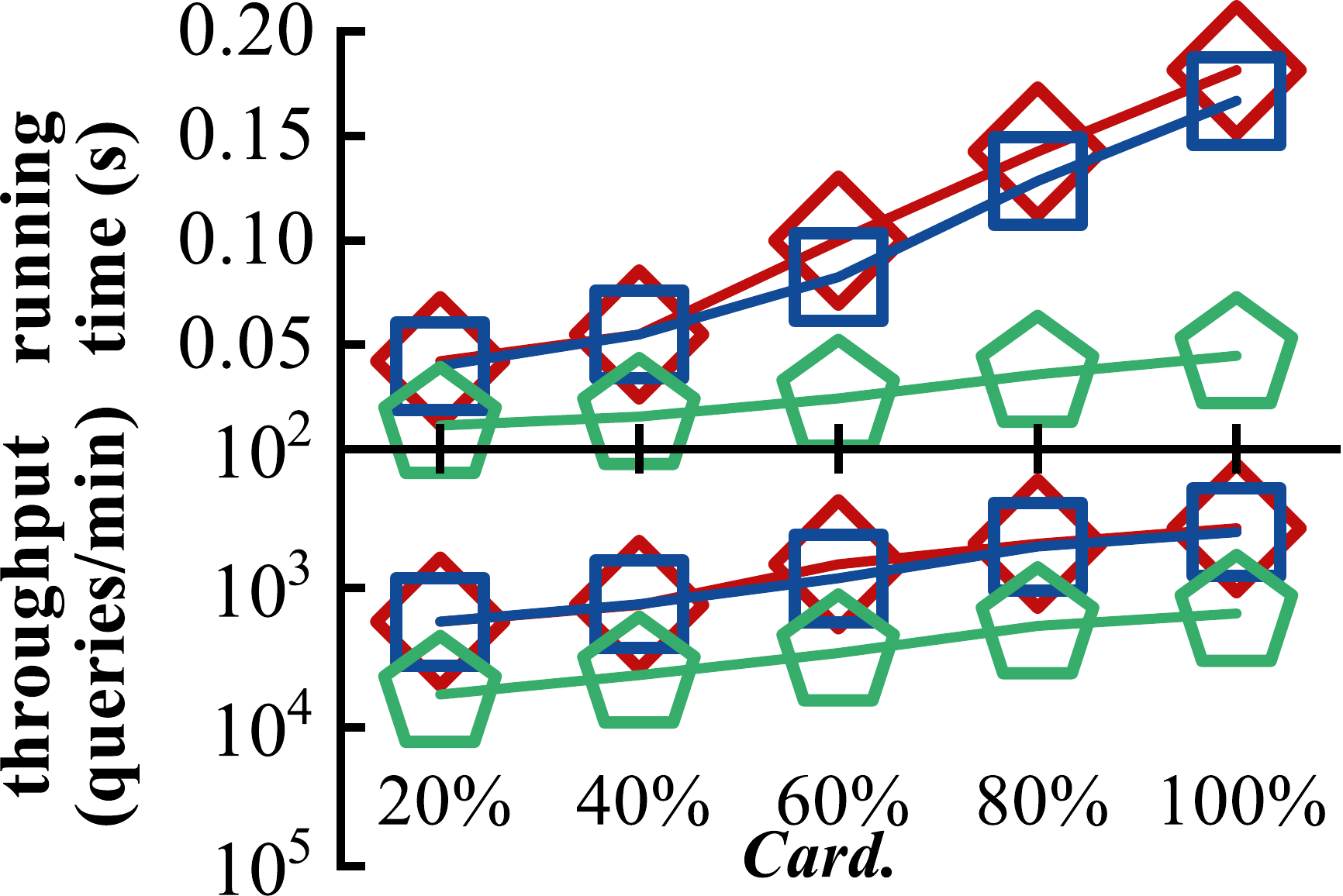}
}
\subfigure[\textit{Synthesis}]{
  \label{fig:rnn-02}
  \includegraphics[width=0.23\textwidth,height=0.14\textwidth]{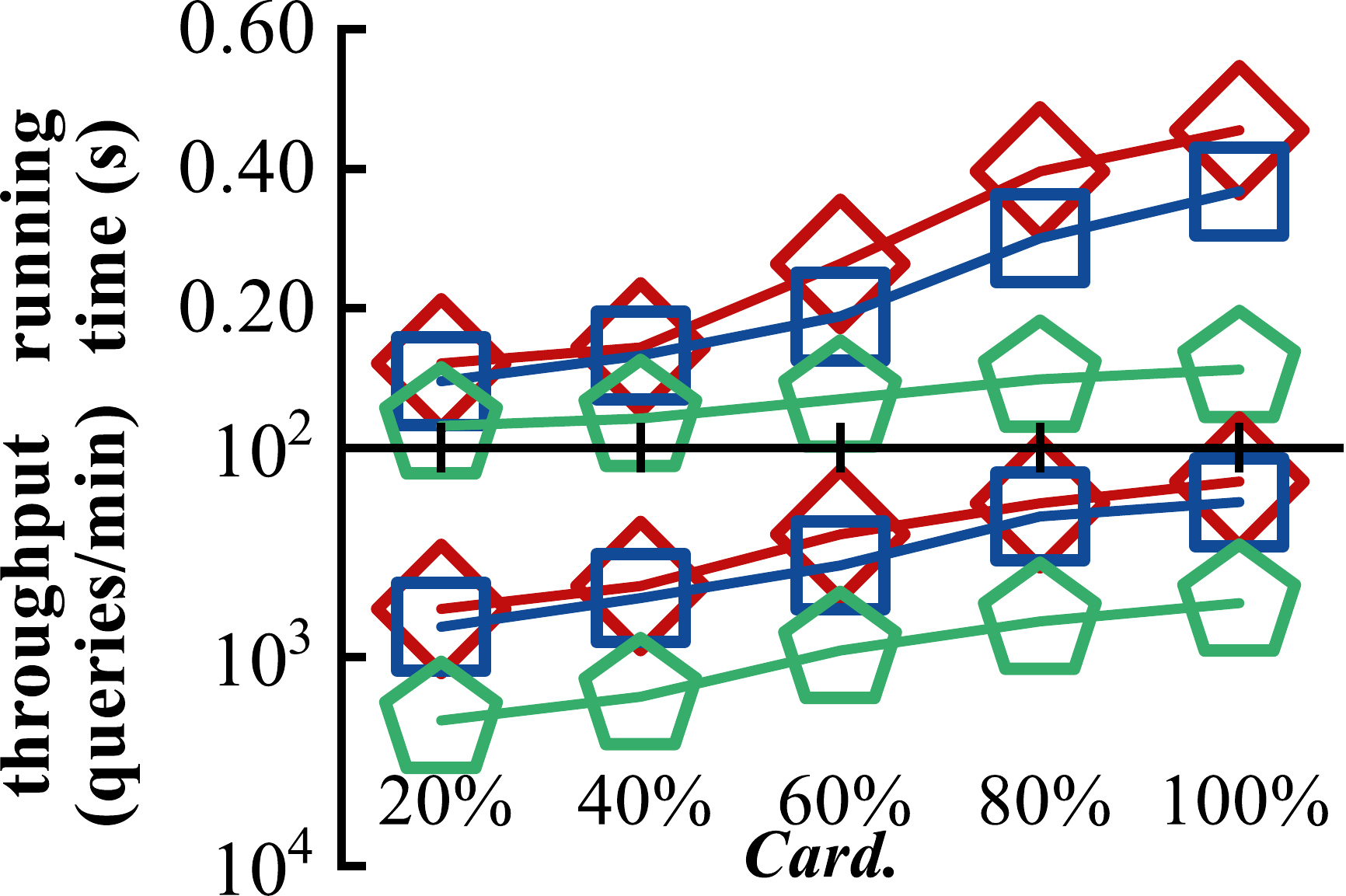}
}
\vspace{-0.3cm}
\caption{{MMRQ Performance vs. Cardinality}}
\vspace{-0.7cm}
\label{fig:mmrq_card_throughput}
\end{center}
\end{figure}

\noindent\textbf{Effect of Cardinality.} Fig.~\ref{fig:mmrq_card_throughput} illustrates the effect of dataset cardinality on  MMRQ performance, with the cardinality varying from 20\% to 100\%. The results indicate that the running time escalates significantly as the dataset size increases, while throughput exhibits a linear decline. This is because when the system encounters an increase in cardinality, it must process a significantly larger volume of data, resulting in a more complex search space and, consequently, higher computational and memory requirements. Despite these challenges, OneDB consistently delivers stable performance across various dataset sizes and superior adaptability with larger datasets. This is achieved through its effective resource distribution and optimized query processing, outperforming other methods. These advantages lie in a flexible indexing mechanism and an efficient load-balancing strategy, enabling OneDB to swiftly handle query requests even as data volumes surge while maintaining high throughput. 


\vspace{-0.15cm}
\subsection{Ablation Study}



\begin{figure}[t]
\begin{center}
\vspace{-1cm}

\subfigtopskip=-7pt
\subfigcapskip=-3pt
\includegraphics[height=0.25cm]{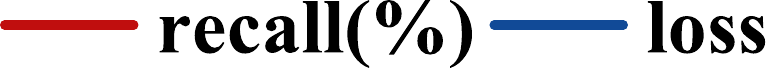}\vspace{0.15cm}
\subfigure[\textit{True Positive Food}]{
  \label{fig:rnn-02}
  \includegraphics[width=0.23\textwidth,height=0.14\textwidth]{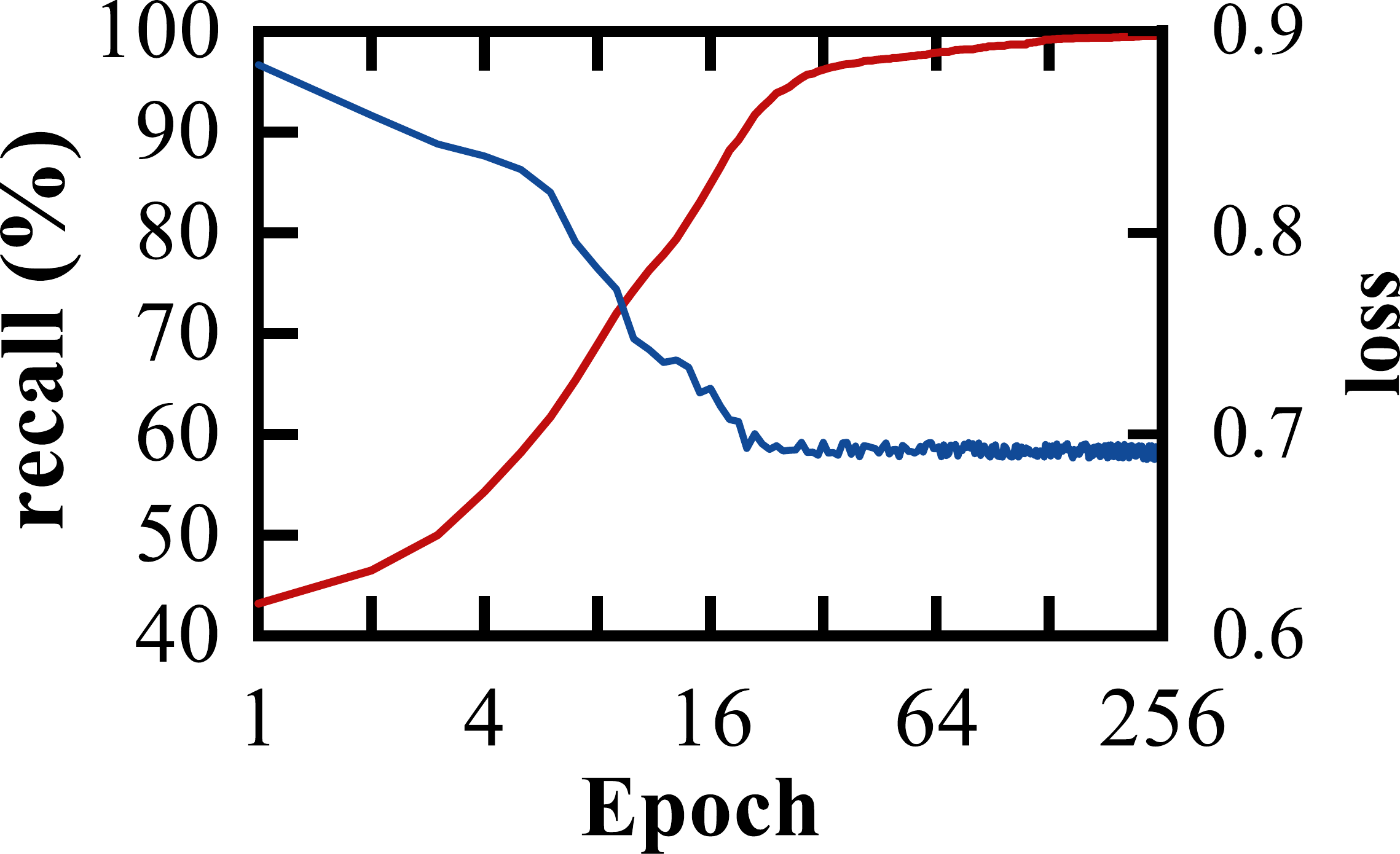}
}
\subfigure[\textit{Random Food}]{
  \label{fig:rnn-02}
  \includegraphics[width=0.23\textwidth,height=0.14\textwidth]{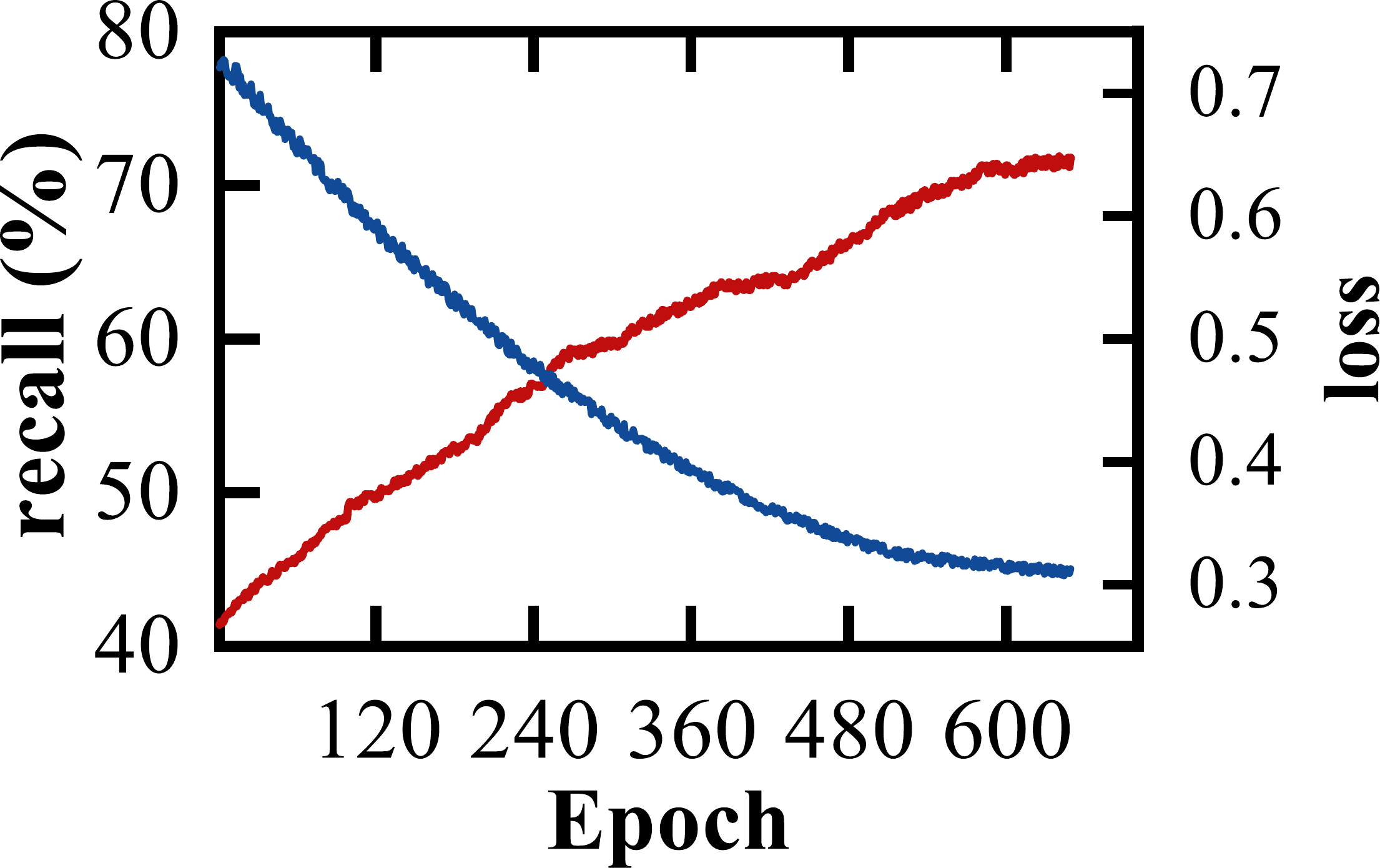}
}

\subfigure[\textit{True Positive Rental}]{
  \label{fig:rnn-02}
  \includegraphics[width=0.23\textwidth,height=0.14\textwidth]{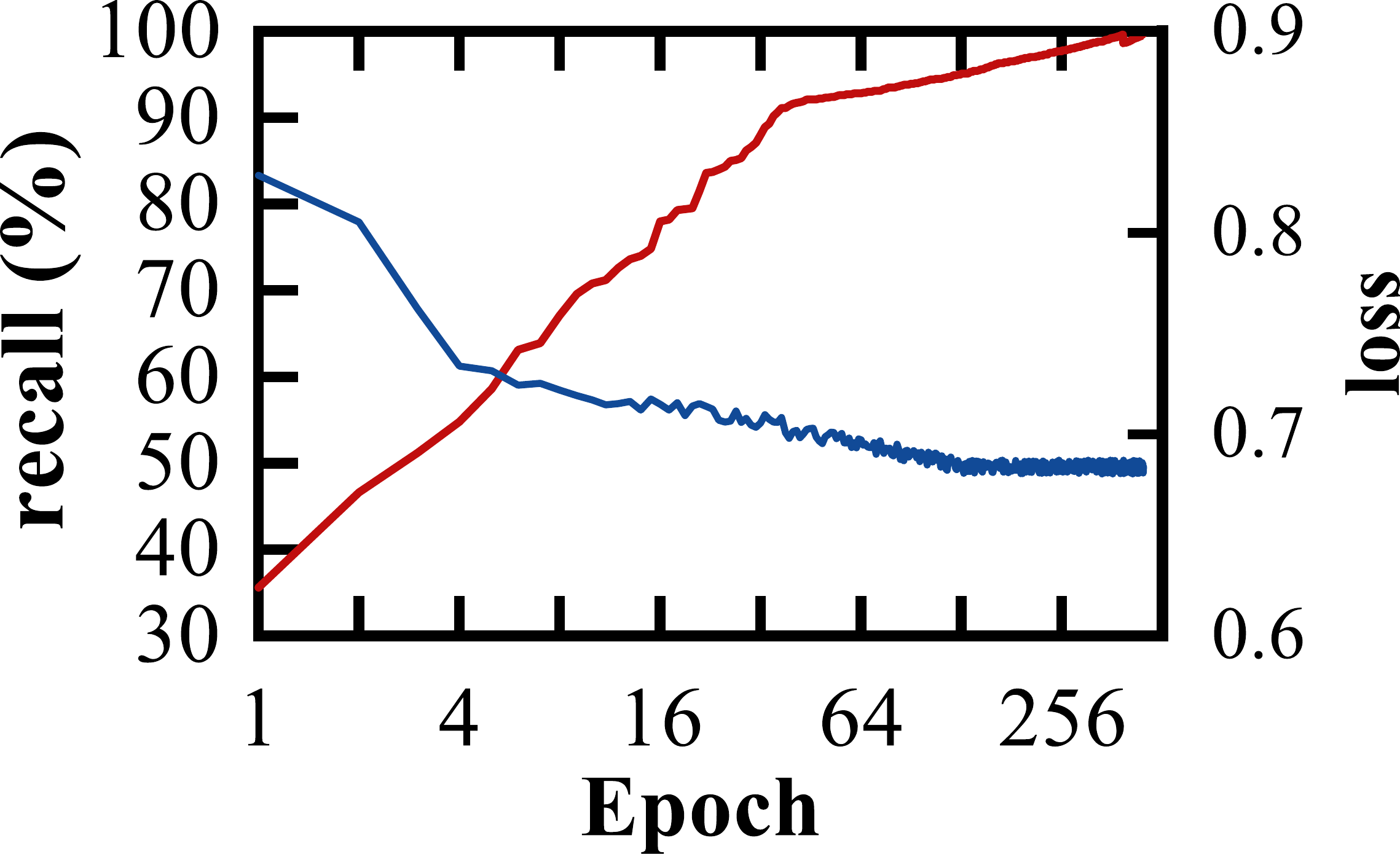}
}
\subfigure[\textit{Random Rental}]{
  \label{fig:rnn-02}
  \includegraphics[width=0.23\textwidth,height=0.14\textwidth]{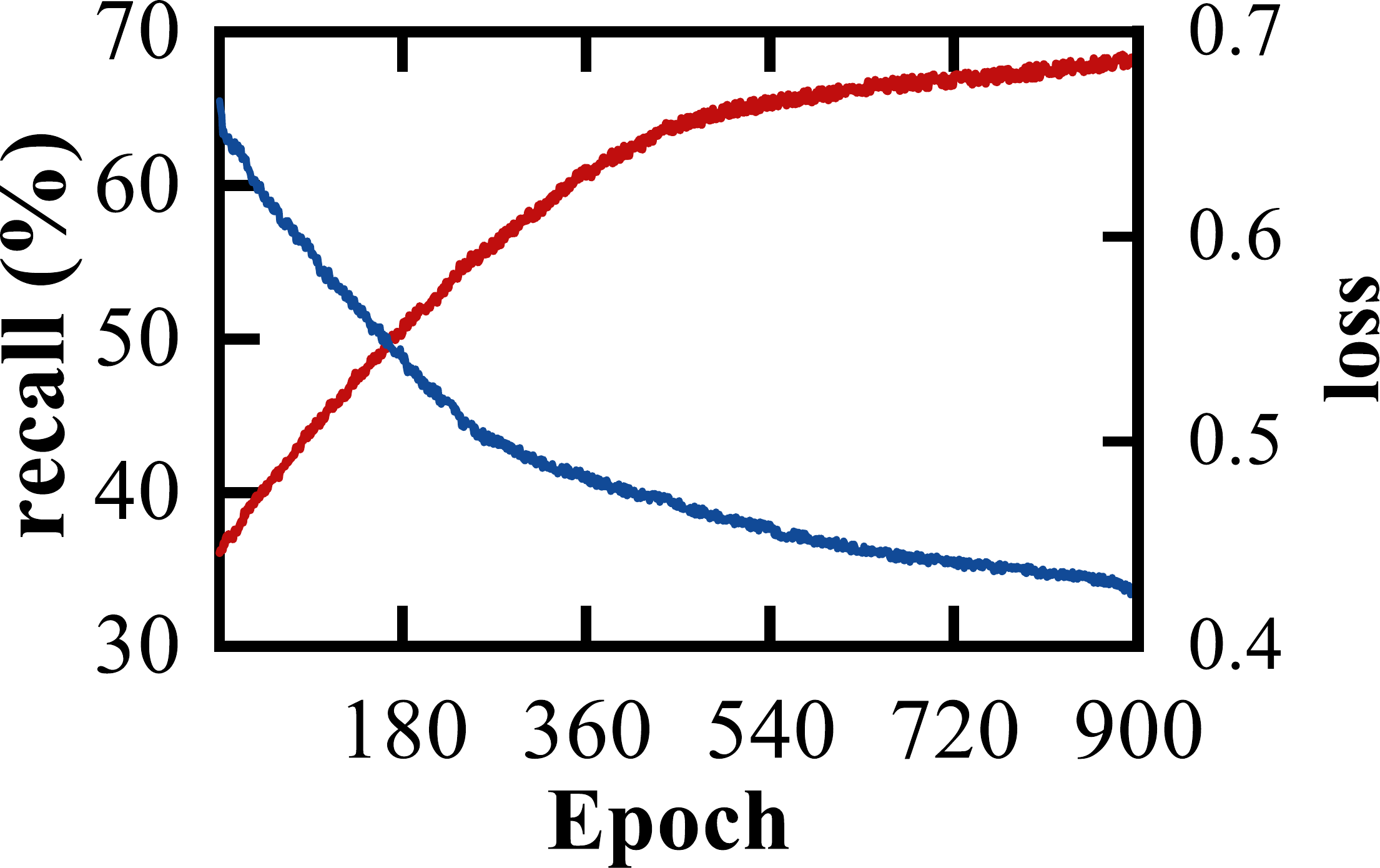}
}

\vspace{-0.3cm}
\caption{{Effect of Weight Learning}}
\vspace{-0.7cm}
\label{fig:weight}
\end{center}
\end{figure}

\textbf{Weight Learning.} We randomly select 30 historical query points, each associated with a known query weight vector and its corresponding 50 ground truth $k$-NN results. To simulate a realistic scenario where users do not explicitly specify modality weights but instead provide representative queries and desirable results, we hide the original weights during training. The learning model  simulates real-world situations where users express their retrieval intent through examples rather than numeric weight settings 
We compare the performance of our \textit{$k$-NN Positive and Negative Sample Generation Strategy} ({non-ground-truth results from the $k$-NN search in each iteration as negative samples}) with \textit{Randomly Selected Negative Samples.} The effectiveness of the weight learning process is evaluated by monitoring both the loss and recall metrics, as illustrated in Figure~\ref{fig:weight}. Compared to the random negative sample generation strategy, our model demonstrates stable convergence and a higher recall rate. This indicates that the weight adjustment process effectively refines the learning parameters over iterations. This stability suggests that the optimization strategy is robust, allowing the model to find an optimal solution without oscillations or divergence.
Furthermore, the weight learning model completes training under {100} seconds to achieve 90\% recall, exhibiting exceptional efficiency. These observations indicate that the multi-modal weight learning model is lightweight and effective.

{To further understand the interpretability of the learned weights, we conduct a case study comparing the modality-wise characteristics of the retrieved 64 $k$-NN queries. Since directly presenting multi-modal retrieval outputs is often un-intuitive, we instead compute the average value for each modality among two kinds of retrieved results with different weights, and compare them to those of manually constructed demo results. As illustrated in Figure~\ref{fig:weight}, the radar charts show that the average value among each modality of the results retrieved using the learned weights closely match the distribution of the manually constructed demo results across all modalities. In contrast, the uniform-weight baseline often over- or under-emphasizes certain modalities, leading to less aligned retrievals. This alignment suggests that the learned weights reflect the user’s intended modality emphasis more faithfully, even without explicitly specified weights.}

\begin{figure}[t]
 \vspace{-1cm}
    \centering    
    \includegraphics[width=1\linewidth,keepaspectratio]{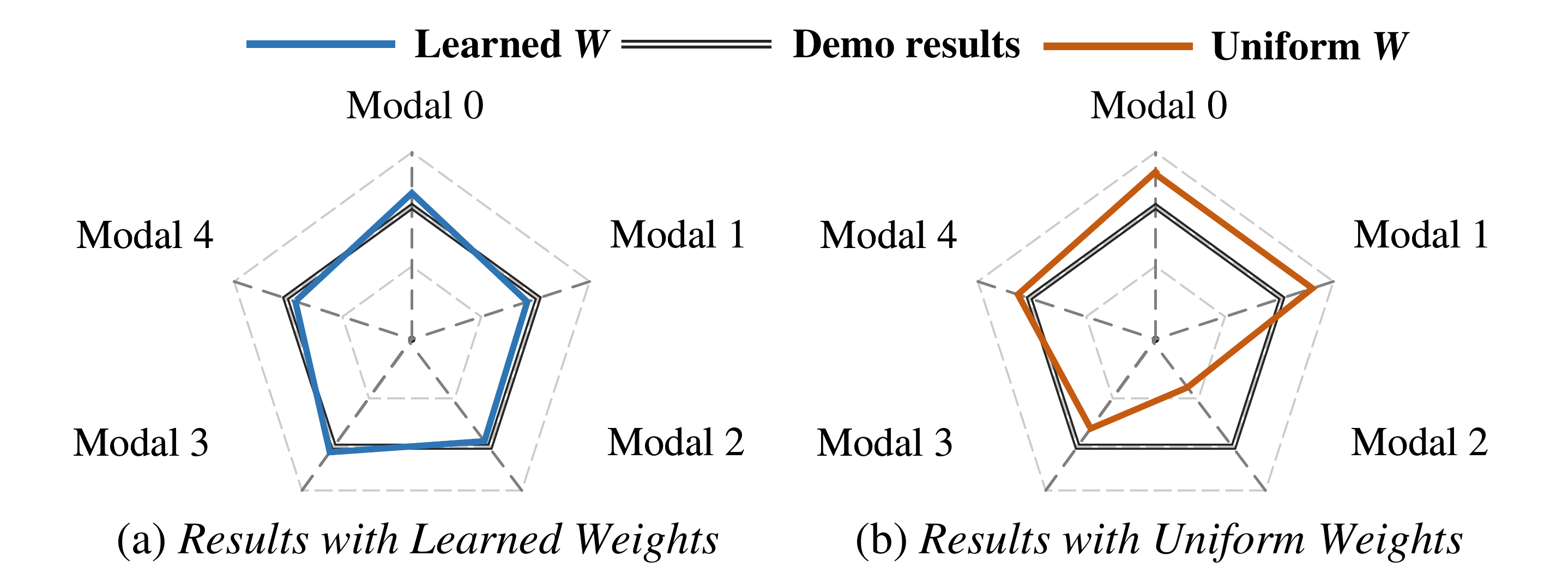} 
    \vspace{-6mm}
   \caption{Case Study}
   \label{fig:case-study}
 \vspace{-0.2cm}
\end{figure}

\begin{figure}[t]
\vspace{-0.1cm}
\begin{center}
\subfigtopskip=-7pt
\subfigcapskip=-3pt
\includegraphics[height=0.25cm]{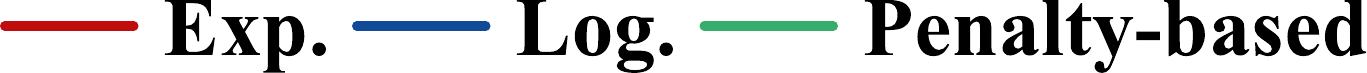}\vspace{0.15cm}

\subfigure[\textit{Air}]{
  \label{fig:rnn-02}
  \includegraphics[width=0.23\textwidth,height=0.14\textwidth]{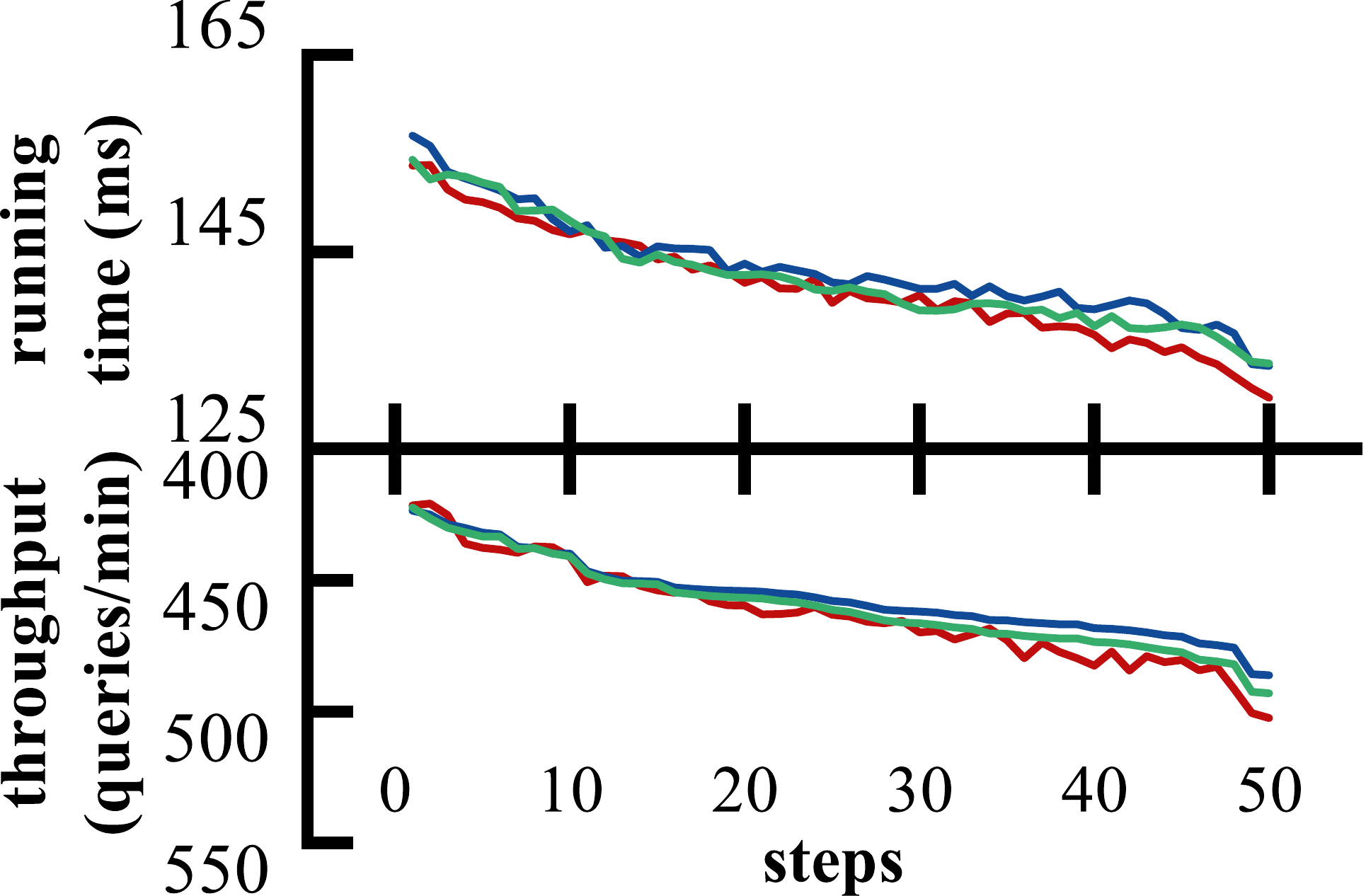}
}
\subfigure[\textit{Synthesis}]{
  \label{fig:rnn-02}
  \includegraphics[width=0.23\textwidth,height=0.14\textwidth]{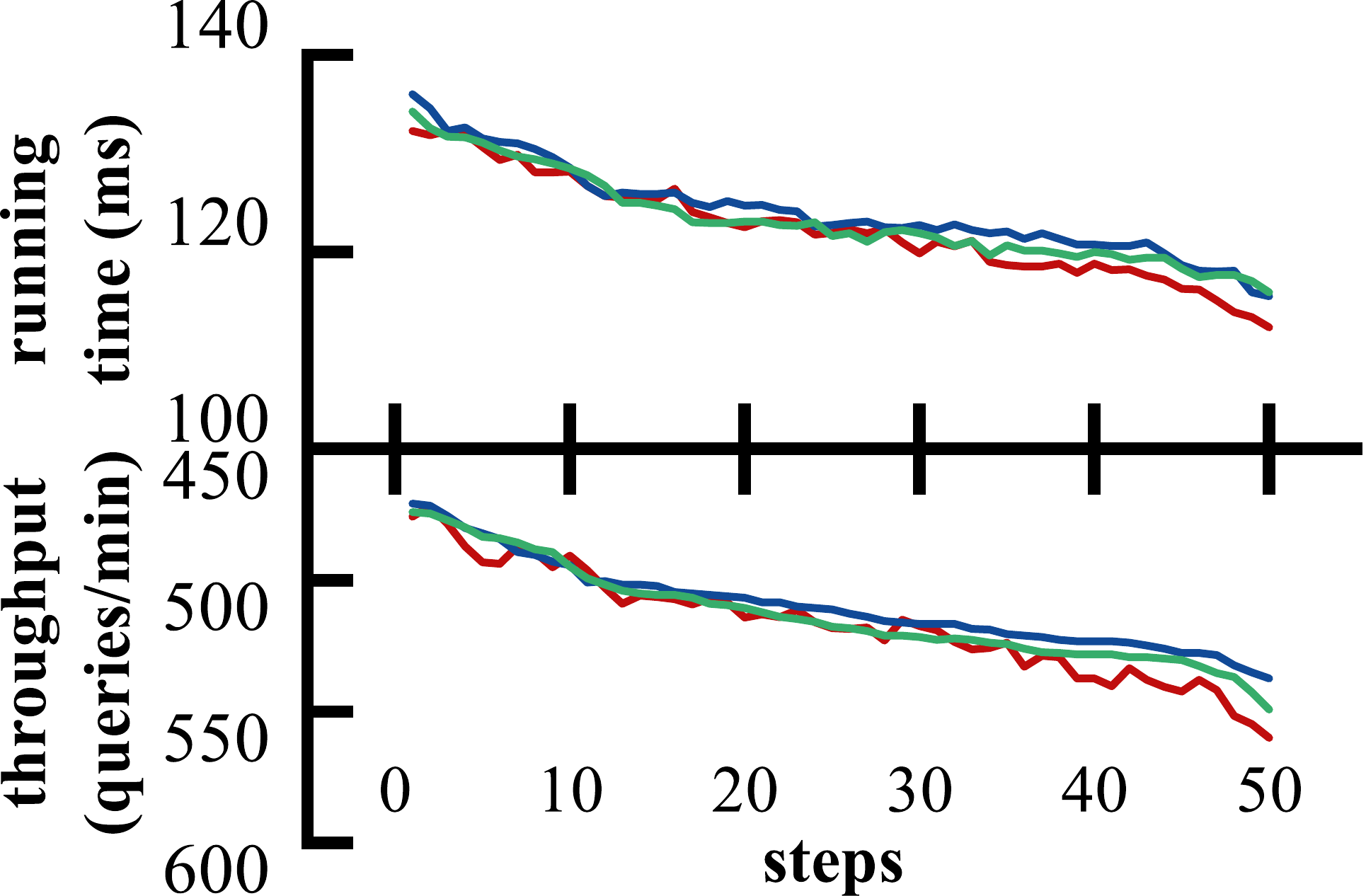}
}

\vspace{-0.2cm}
\caption{{Effect of Different Reward Function}}
\vspace{-0.7cm}
\label{fig:reward_function}
\end{center}
\end{figure}


 
\noindent\textbf{Parameter Tuning.} Fig.~\ref{fig:reward_function} shows the end-to-end parameter tuning training process. We optimized the parameters of OneDB and two other baselines through the tuning module on the "Synthesis" dataset. It can be observed that after 50 steps of training, the search performance improved by approximately 15\%-17\%, showcasing the effectiveness of the adopted tuning module. Fig.~\ref{fig:reward_function} illustrates the performance of three different reward function variants during the tuning process. Specifically, the exponentially weighted function accelerated performance improvements in the early stages by amplifying larger gains, allowing the system to reach optimal configurations quickly, but gains tapered off later. Logarithmically weighted function handled performance fluctuations smoothly, with slower initial gains but more stable and continuous improvements later. The penalty-based function strictly controls performance drops, avoiding unstable configurations and showing stable improvements in the later stages but slower early growth. 
{These results highlight the efficiency improvements achieved by the reinforcement learning-based end-to-end tuning framework in OneDB. Among the three strategies, the exponential reward function consistently achieved the best final performance in both latency and throughput, making it the most effective option in practice. While the penalty-based function provides more conservative adjustments, the exponential variant delivers faster convergence and higher overall gains, making it better for responsive systems.}

\section{Conclusion}
\label{sec:conclusions}

In this paper, we present OneDB, a distributed multi-metric data similarity search system on multi-modal datasets. OneDB integrates a lightweight metric weight learning model, a dual-layer indexing strategy, and an end-to-end parameter auto-tuning mechanism to support flexible and accurate similarity search across heterogeneous data modalities. 
Experimental results demonstrate that OneDB outperforms existing state-of-the-art solutions regarding efficiency and accuracy, particularly in handling diverse user preferences. In the future, we plan to explore dynamic data scenarios.

\bibliographystyle{IEEEtran}
\balance

\end{document}